\newtheorem{thm}{Theorem}
\newtheorem{lem}[thm]{Lemma}
\newtheorem{cnj}[thm]{Conjecture}
\newtheorem{clm}[thm]{Claim}
\newcommand{\ifreplace}{\iffalse} 
\newcommand{\ifpart}{\iftrue} 
\newcommand{\mmax}[1]{\overline{#1}}
\title{An exact upper bound on the size of minimal clique covers}
\author{Ryan McIntyre \and Michael Soltys}
\institute{California State University Channel Islands\\
Dept.\ of Computer Science\\
One University Drive\\
Camarillo, CA 93012, USA\\
\email{\{ryan.mcintyre466@myci.csuci.edu,michael.soltys@csuci.edu\}}}
\date{\today}
\begin{document}
\maketitle

\begin{abstract}
Indeterminate strings have received considerable attention in the
recent past; see for example~\cite{Christodoulakis201534}
and~\cite{indeterminates-2017}. This attention is due to their
applicability in bioinformatics, and to the natural correspondence
with undirected graphs.  One aspect of this correspondence is the fact
that the minimal alphabet size of indeterminates representing any
given undirected graph corresponds to the size of the minimal clique
cover of this graph.  This paper solves a related problem proposed
in~\cite{indeterminates-2017}: compute $\Theta_n(m)$, which is the
size of the largest possible minimal clique cover (i.e., an exact
upper bound), and hence alphabet size of
the corresponding indeterminate, of any graph on $n$ vertices and $m$
edges. 
\end{abstract}

\section{Introduction}

Given an undirected graph $G=(V,E)$, we say that
$C=\{C_1,C_2,\ldots,C_k\}$ is a {\em clique cover} of $G$ of size $k$
if each $C_i$ is a set of vertices comprising a clique, and $\cup
C=V$, and furthermore, given any edge $(u,v)\in E$, there is a $C_i$
that contains both $u$ and $v$. We denote $\theta(G)$ as the size of
the smallest clique cover of $G$ (\cite{roberts-1983}). 

Let $\mathcal{G}_n(m)$ be the set of all undirected graphs on $n$
vertices and $m$ edges; of course, $0\le m\le{n\choose 2}$.  Let
$\Theta_n(m)$ be the largest possible $\theta(G)$ for
$G\in\mathcal{G}_n(m)$.  In \cite[Problem~11]{indeterminates-2017} the
authors pose the following problem: describe the function
$\Theta_n(m)$ for every given $n$, and they provide as an example a
graph for $\Theta_7(m)$, where $m$ ranges over $\{0,1,\ldots,21\}$,
$21={7\choose 2}$ (see~\cite[Fig.~3]{indeterminates-2017}). Given the
fact that for $n>7$ the number of graphs quickly becomes unwieldy, it
is desirable to compute $\Theta_n(m)$ analytically, as $\Theta_n(m)$
provides an exact bound on the alphabet size of indeterminates
obtained from a graph on $n$ vertices and $m$ edges.

We already know from~\cite{indeterminates-2017} that for each $n$, the
global maximum is reached at $m=\lfloor n^2/4\rfloor$. The reason 
for this is that $\lfloor n^2/4 \rfloor$ is the largest number of edges
that can fit in a graph on $n$ vertices without forcing any triangles;
note that such a graph is simply a complete bipartite graph. On the
other hand, if a graph has no triangles and no singletons, the only
possible clique cover for such a graph consists of all the edges (more
precisely, the cover consists of all pairs $\{u,v\}$ where $e=(u,v)$
is an edge in the graph).

\ifreplace
In the interest of space, some proofs have been omitted.
\fi

We aim to characterize $\Theta_n(m)$ in our primary result,
Theorem~\ref{thm:final}.  We also establish Algorithm~\ref{alg:ccb},
which computes $\Theta_n(m)$ in linear time.  Our motivation comes
from~\cite{indeterminates-2017}, where $\Theta_n(m)$ would be used as
an upper bound for the size of a minimal alphabet for an indeterminate
string based on the edge and vertex counts of said string's
corresponding undirected graph. The hope, of course, is that
exploration of the structural causes behind the upper bound of
$\theta(G)$ will help us to better understand the problem of finding a
minimal or near-minimal clique cover, which corresponds directly to
finding a small alphabet for an indeterminate string.  While the
motivation comes from string processing, our results are primarily in
extremal graph theory. We'll apply theorems provided by
Mantel~\cite{M1907} (Theorem~\ref{thm:mantel}) and
Lov\'asz~\cite{lovasz-1968} (Theorem~\ref{thm:lovasz}) to prove that
$\Theta_n(m)$ has many recursive properties, which we will then use to
characterize it.

\begin{thm}[Mantel]\label{thm:mantel}
If a graph on $n$ vertices contains no triangle, then it contains at
most $\lfloor n^2/4\rfloor$ edges.
\end{thm}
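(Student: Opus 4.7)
The plan is to run the classical extremal argument based on the neighborhood of a maximum-degree vertex. Let $G=(V,E)$ be triangle-free on $n$ vertices, and fix a vertex $v$ whose degree $\Delta = \deg(v)$ is maximum in $G$. The first step is the crucial structural observation: $N(v)$ is an independent set. Indeed, any edge $(x,y)$ with $x,y \in N(v)$ would, together with $v$, form a triangle on $\{v,x,y\}$, contradicting our assumption.

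Next, I would set $W := V \setminus N(v)$, so $|W| = n - \Delta$, and bound $|E|$ by summing degrees over $W$. Because $N(v)$ is independent, every edge of $G$ has at least one endpoint in $W$, so
\[
|E| \;\le\; \sum_{u \in W} \deg(u).
\]
(An edge with one endpoint in $W$ is counted once; an edge with both endpoints in $W$ is counted twice; no edge is missed.) Each term on the right is at most $\Delta$ by the maximality of $\deg(v)$, so the right-hand side is at most $(n-\Delta)\Delta$.

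Finally, I would optimize over the free parameter $\Delta$. The real-valued function $f(x) = x(n-x)$ is a downward parabola maximized at $x = n/2$ with value $n^{2}/4$; therefore $|E| \le n^{2}/4$, and since $|E|$ is an integer this strengthens to $|E| \le \lfloor n^{2}/4 \rfloor$, as required.

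I do not expect any serious obstacle here: the result is classical, and the only place one must be careful is the line $|E| \le \sum_{u \in W} \deg(u)$, where one should explicitly check that edges inside $W$ are simply overcounted (which is harmless for an upper bound) rather than undercounted. An alternative route would be to use Cauchy--Schwarz on $\sum_{(u,v)\in E}(\deg(u)+\deg(v)) = \sum_v \deg(v)^2$ combined with the triangle-free inequality $\deg(u)+\deg(v) \le n$ for every edge, but the neighborhood argument above is shorter and more structural, and better suited to the subsequent use of Mantel's theorem in bounding $\Theta_n(m)$.
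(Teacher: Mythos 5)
Your argument is correct: the maximum-degree vertex $v$ has an independent neighborhood in a triangle-free graph, every edge then meets $W=V\setminus N(v)$ (note $v\in W$, so $|W|=n-\Delta$ is right), and $|E|\le\sum_{u\in W}\deg(u)\le(n-\Delta)\Delta\le n^2/4$, with integrality giving the floor. The paper itself offers no proof of this statement --- Mantel's theorem is quoted as a classical result with a citation --- so there is nothing to compare against; your write-up is the standard extremal argument and is sound as it stands.
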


The expression $\lfloor n^2/4\rfloor$ will be used frequently
throughout the paper, and so we abbreviate it as $\mmax{n}:=\lfloor
n^2/4\rfloor$. In general, for any expression $\text{exp}$, we let
$\mmax{\text{exp}}=\lfloor\text{exp}^2/4\rfloor$.

\begin{thm}[Lov\'asz]\label{thm:lovasz}
Given $G\in\mathcal{G}_n(m)$, let $k$ be the number of missing edges 
(i.e. $k={n\choose2}-m$),
and let $t$ be the largest natural number such that $t^2-t\leq k$. Then
$\theta(G)\leq k+t$. Moreover, this bound is exact for $k=t^2$ or $k=t^2-t$.
\end{thm}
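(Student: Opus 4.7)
The plan is to prove the bound $\theta(G)\le k+t$ by a double induction on $(n,k)$, and to exhibit explicit extremal graphs for the two exact cases.

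For the upper bound, the base case $(n,k)=(1,0)$ is immediate. For the inductive step, split on whether $G$ has a universal vertex. If some $v$ is adjacent to every other vertex of $G$ (equivalently, $v$ is isolated in $\bar G$), then $G-v$ has the same $k$ but fewer vertices, so the inductive hypothesis applies and yields a clique cover of $G-v$ of size $\le k+t$; adjoining $v$ to every clique of this cover extends it to $G$ without increasing the count, since $v$ is adjacent to all other vertices. Otherwise every vertex of $G$ is in at least one missing edge, so $\bar G$ has minimum degree $\ge 1$ and, by handshake, at most $2k$ vertices. In this harder sub-case I would pick any missing edge $uv$ of $G$ and form $G'=G+uv$, which has $k-1$ missing edges; applying the inductive hypothesis to $G'$ (with associated parameter $t'\in\{t-1,t\}$) gives a clique cover of $G'$ of size $\le (k-1)+t'$, and I would convert it to a clique cover of $G$ by splitting every clique containing both $u$ and $v$ into two cliques, one missing $u$ and one missing $v$. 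If $\ell$ such ``bridge'' cliques are split this way, the resulting cover of $G$ has size $(k-1)+t'+\ell$, so the bound $k+t$ follows provided $\ell\le 1+(t-t')$.

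For exactness, I would take $G=K_{t,t}$ when $k=t^2-t$: the calculation $\binom{2t}{2}-t^2=t^2-t$ shows this has $k$ missing edges, and since $K_{t,t}$ is triangle-free, its clique cover must consist of all $t^2$ edges, giving $\theta(G)=t^2=k+t$. For $k=t^2$, take $G=K_{t,t+1}$: analogously $\binom{2t+1}{2}-t(t+1)=t^2$ missing edges and triangle-freeness forces $\theta(G)=t(t+1)=k+t$.

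The main obstacle is the non-universal inductive step: there is no a priori reason why an arbitrary optimal cover of $G'$ should use at most $1+(t-t')$ cliques to cover the new edge $uv$. The fix is either a local argument showing that an optimal cover can always be restructured so that few bridge cliques span $uv$ (for instance, by merging or trimming cliques around $uv$), or a more global construction---starting from a maximum matching of $\bar G$, whose complement in the vertex set forms a clique of $G$ by maximality, and then carefully covering the rest of $G$ with $k+t-1$ further cliques.
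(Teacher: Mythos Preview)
First, note that the paper does not supply its own proof of this theorem: it is quoted as a result of Lov\'asz (1968) and used as a black box throughout, so there is no in-paper argument to compare your attempt against.

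Your extremal examples for the exactness statements are correct: $K_{t,t}$ has $\binom{2t}{2}-t^2=t^2-t$ missing edges and, being triangle-free with no isolated vertices, clique cover number exactly $t^2=k+t$; similarly $K_{t,t+1}$ realises $k=t^2$ and $\theta=t(t+1)=k+t$. The universal-vertex reduction in your inductive step is also sound, since adjoining a universal vertex to every clique of a cover of $G-v$ yields a cover of $G$ of the same size.

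The gap you yourself flag, however, is real, and neither of your proposed repairs actually closes it. In the non-universal case you add a missing edge $uv$, take an optimal cover $\mathcal{C}$ of $G'=G+uv$, and split each of the $\ell$ cliques of $\mathcal{C}$ containing both $u$ and $v$. You then need $\ell\le 1+(t-t')\in\{1,2\}$, but nothing in the argument bounds $\ell$: if $C_1,C_2\in\mathcal{C}$ both contain $u$ and $v$, one cannot in general drop $v$ from $C_2$ without uncovering some edge $vw$ with $w\in C_2\setminus C_1$, and $C_1\cup C_2$ need not be a clique, so there is no evident local rearrangement forcing $\ell\le 2$ while preserving optimality. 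Your second suggestion---building the cover from a maximum matching $M$ of $\bar G$, whose unsaturated vertices indeed form a clique of $G$---is closer in spirit to how such bounds are obtained, but as written it is only a hint: you have not explained how to cover the edges of $G$ incident to the $2|M|$ matched vertices within the remaining budget of $k+t-1$ cliques, and the arithmetic linking $|M|$, $k$, and $t$ (through the defining inequality $t^2-t\le k$) is precisely where the substance of the proof lies. As it stands, the upper bound remains unproved.
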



Of course, as Lov\`asz's bound relies solely on the number of missing
edges, it also relies on the assumption that the vertex and edge counts
are arbitrarily large; the bound is exact at the specified values of $k$, 
as long as that $m\geq\mmax{n}$.

Clearly, $t^2=\mmax{2t}$, and $t^2-t=\mmax{2t-1}$ (or identically 
$t^2+t=\mmax{2t+1}$), so Theorem~\ref{thm:lovasz} 
can be restated:

{\it Given $G\in\mathcal{G}_n(m)$, define $k$ as above
and let $t$ be the largest natural number such that 
$\mmax{2t-1}\leq k$. $\theta(G)\leq k+t$. Assuming $m\geq\mmax{n}$, this
bound is sharp if $k=\mmax{2t-1}$ or $k=\mmax{2t}$. That is, 
$\Theta_n(m)=\mmax{2t-1}+t=\mmax{2t}$ if 
$m={n\choose2}-\mmax{2t-1}\geq\mmax{n}$
and $\Theta_n(m)=\mmax{2t}+t=\mmax{2t+1}$ if 
$m={n\choose2}-\mmax{2t}\geq\mmax{n}$.}

We propose an improvement to
Theorem~\ref{thm:lovasz} in Conjecture~\ref{cnj:seventeenth}. In lieu of
proof for all $m$, we provide Lemmas~\ref{lem:sixteenth}, 
\ref{lem:eighteenth} and \ref{lem:nineteenth}, which prove that 
Conjecture~\ref{cnj:seventeenth} is true for some $m$.
The conjecture reads:

{\it Given $m\geq\mmax{n}$, define $k$ as above and let $t$ be the largest 
natural number such that $\mmax{t}\leq k$. $\Theta_n(m)=\mmax{t+1}$}

We use $i(G)$ to denote the number of singletons (or isolated
vertices) in $G$, and $c(G)$ to denote the number of non-isolated
vertices in $G$. Of course, for any graph $G$ on $n$ vertices we have
$i(G)+c(G)=n$. Let $I(G)$ denote the subgraph of $G$ consisting of the
isolated points in $G$ ($i(G)=|I(G)|$), and $C(G)$ denote the subgraph
of $G$ consisting of all of the non-isolated points in $G$
($c(G)=|C(G)|$).  Let $S(G)$ denote those vertices in $C(G)$ which are
connected by an edge to every other vertex in $C(G)$, i.e.,
$S(G)=\{v\in C(G): \forall u\in C(G)-\{v\},(v,u)\in E\}$. We call such
vertices {\em stars}. Let
$s(G)=|S(G)|$. Finally, $\hat{G}$ will denote the subgraph of $G$
which results from removing all vertices in $S(G)$, along with their
edges, but with one exception: if $C(G)$ is a clique, it is simply
replaced with a new singleton vertex $v_{C(G)}$ in $\hat{G}$.

As discussed in~\cite[\S2]{indeterminates-2017}, based on the results
of Mantel and Erd{\"o}s (\cite{M1907,erdos}), $\Theta_n(m)$ achieves
its global maximum at precisely $\mmax{n}$, and $\Theta_n(m)$ is
non-decreasing for $m\le\mmax{n}$ and non-increasing for
$m\ge\mmax{n}$, and $\Theta_n(\mmax{n})=\mmax{n}$. This fixed point
corresponds to the situation where the number of edges is as large as
possible without forcing any triangles, i.e., $m=\mmax{n}$, and it is
precisely at this point when the best cover can be forced to include all
$\mmax{n}$ of the edges. See Figure~\ref{fig:results-8-delta} for
$\Theta_8(m)$, where $\mmax{n}=\mmax{8}=\lfloor 8^2/4\rfloor=16$.

We describe $\Theta_n(m)$ in two sections:
Section~\ref{sec:pre-maximum} for $m\le\mmax{n}$, which relies primarily on
Theorem~\ref{thm:mantel}, and
Section~\ref{sec:post-maximum}, for $m\ge\mmax{n}$, based on 
Theorem~\ref{thm:lovasz}. 
Note that we assume
throughout that $n\ge 4$, thus when we say ``for all $n$,'' we mean
``for all $n\ge4$.''

\section{$\Theta_n(m)$ for $m\le\mmax{n}$}\label{sec:pre-maximum} 

We prove a sequence of auxiliary results that will help us
characterize the graph of $\Theta_n(m)$ for $m\le\mmax{n}$. The
forthcoming 
material is rather technical, but the reader will find it
easier to follow by keeping the graph in
Figure~\ref{fig:results-8-delta} in mind.

\begin{figure}[h]

\begingroup
  \makeatletter
  \providecommand\color[2][]{%
    \GenericError{(gnuplot) \space\space\space\@spaces}{%
      Package color not loaded in conjunction with
      terminal option `colourtext'%
    }{See the gnuplot documentation for explanation.%
    }{Either use 'blacktext' in gnuplot or load the package
      color.sty in LaTeX.}%
    \renewcommand\color[2][]{}%
  }%
  \providecommand\includegraphics[2][]{%
    \GenericError{(gnuplot) \space\space\space\@spaces}{%
      Package graphicx or graphics not loaded%
    }{See the gnuplot documentation for explanation.%
    }{The gnuplot epslatex terminal needs graphicx.sty or graphics.sty.}%
    \renewcommand\includegraphics[2][]{}%
  }%
  \providecommand\rotatebox[2]{#2}%
  \@ifundefined{ifGPcolor}{%
    \newif\ifGPcolor
    \GPcolorfalse
  }{}%
  \@ifundefined{ifGPblacktext}{%
    \newif\ifGPblacktext
    \GPblacktexttrue
  }{}%
  \let\gplgaddtomacro\g@addto@macro
  \gdef\gplbacktext{}%
  \gdef\gplfronttext{}%
  \makeatother
  \ifGPblacktext
    \def\colorrgb#1{}%
    \def\colorgray#1{}%
  \else
    \ifGPcolor
      \def\colorrgb#1{\color[rgb]{#1}}%
      \def\colorgray#1{\color[gray]{#1}}%
      \expandafter\def\csname LTw\endcsname{\color{white}}%
      \expandafter\def\csname LTb\endcsname{\color{black}}%
      \expandafter\def\csname LTa\endcsname{\color{black}}%
      \expandafter\def\csname LT0\endcsname{\color[rgb]{1,0,0}}%
      \expandafter\def\csname LT1\endcsname{\color[rgb]{0,1,0}}%
      \expandafter\def\csname LT2\endcsname{\color[rgb]{0,0,1}}%
      \expandafter\def\csname LT3\endcsname{\color[rgb]{1,0,1}}%
      \expandafter\def\csname LT4\endcsname{\color[rgb]{0,1,1}}%
      \expandafter\def\csname LT5\endcsname{\color[rgb]{1,1,0}}%
      \expandafter\def\csname LT6\endcsname{\color[rgb]{0,0,0}}%
      \expandafter\def\csname LT7\endcsname{\color[rgb]{1,0.3,0}}%
      \expandafter\def\csname LT8\endcsname{\color[rgb]{0.5,0.5,0.5}}%
    \else
      \def\colorrgb#1{\color{black}}%
      \def\colorgray#1{\color[gray]{#1}}%
      \expandafter\def\csname LTw\endcsname{\color{white}}%
      \expandafter\def\csname LTb\endcsname{\color{black}}%
      \expandafter\def\csname LTa\endcsname{\color{black}}%
      \expandafter\def\csname LT0\endcsname{\color{black}}%
      \expandafter\def\csname LT1\endcsname{\color{black}}%
      \expandafter\def\csname LT2\endcsname{\color{black}}%
      \expandafter\def\csname LT3\endcsname{\color{black}}%
      \expandafter\def\csname LT4\endcsname{\color{black}}%
      \expandafter\def\csname LT5\endcsname{\color{black}}%
      \expandafter\def\csname LT6\endcsname{\color{black}}%
      \expandafter\def\csname LT7\endcsname{\color{black}}%
      \expandafter\def\csname LT8\endcsname{\color{black}}%
    \fi
  \fi
    \setlength{\unitlength}{0.0500bp}%
    \ifx\gptboxheight\undefined%
      \newlength{\gptboxheight}%
      \newlength{\gptboxwidth}%
      \newsavebox{\gptboxtext}%
    \fi%
    \setlength{\fboxrule}{0.5pt}%
    \setlength{\fboxsep}{1pt}%
\begin{picture}(7200.00,5040.00)%
    \gplgaddtomacro\gplbacktext{%
      \csname LTb\endcsname%
      \put(682,704){\makebox(0,0)[r]{\strut{}$0$}}%
      \put(682,1137){\makebox(0,0)[r]{\strut{}$2$}}%
      \put(682,1570){\makebox(0,0)[r]{\strut{}$4$}}%
      \put(682,2002){\makebox(0,0)[r]{\strut{}$6$}}%
      \put(682,2435){\makebox(0,0)[r]{\strut{}$8$}}%
      \put(682,2868){\makebox(0,0)[r]{\strut{}$10$}}%
      \put(682,3301){\makebox(0,0)[r]{\strut{}$12$}}%
      \put(682,3733){\makebox(0,0)[r]{\strut{}$14$}}%
      \put(682,4166){\makebox(0,0)[r]{\strut{}$16$}}%
      \put(682,4599){\makebox(0,0)[r]{\strut{}$18$}}%
      \put(4236,484){\makebox(0,0){\strut{}$\mmax{n}=16$}}%
      \put(814,484){\makebox(0,0){\strut{}$0$}}%
      \put(1242,484){\makebox(0,0){\strut{}$2$}}%
      \put(1670,484){\makebox(0,0){\strut{}$4$}}%
      \put(2097,484){\makebox(0,0){\strut{}$6$}}%
      \put(2525,484){\makebox(0,0){\strut{}$8$}}%
      \put(2953,484){\makebox(0,0){\strut{}$10$}}%
      \put(3381,484){\makebox(0,0){\strut{}$12$}}%
      \put(3809,484){\makebox(0,0){\strut{}$14$}}%
      \put(4664,484){\makebox(0,0){\strut{}$18$}}%
      \put(5092,484){\makebox(0,0){\strut{}$20$}}%
      \put(5520,484){\makebox(0,0){\strut{}$22$}}%
      \put(5947,484){\makebox(0,0){\strut{}$24$}}%
      \put(6375,484){\makebox(0,0){\strut{}$26$}}%
      \put(6803,484){\makebox(0,0){\strut{}$28$}}%
      \put(1456,4819){\makebox(0,0){\strut{}$\delta_1$}}%
      \put(1883,4819){\makebox(0,0){\strut{}$\delta_1$}}%
      \put(2418,4819){\makebox(0,0){\strut{}$\delta_2$}}%
      \put(3060,4819){\makebox(0,0){\strut{}$\delta_2$}}%
      \put(3809,4819){\makebox(0,0){\strut{}$\delta_3$}}%
    }%
    \gplgaddtomacro\gplfronttext{%
      \csname LTb\endcsname%
      \put(176,2651){\rotatebox{-270}{\makebox(0,0){\strut{}size of largest minimal clique cover}}}%
      \put(3808,154){\makebox(0,0){\strut{}number of edges on 8 vertices}}%
    }%
    \gplbacktext
    \put(0,0){\includegraphics{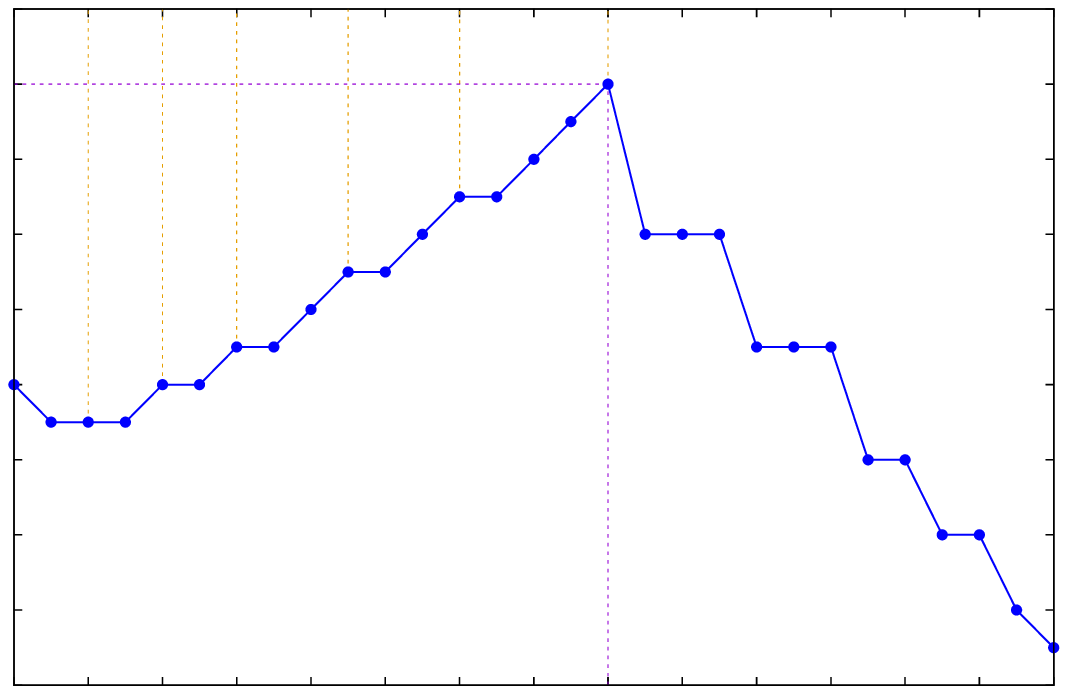}}%
    \gplfronttext
  \end{picture}%
\endgroup
\caption{Left side of the graph for $\Theta_8(m)$}\label{fig:results-8-delta}
\end{figure}

\begin{clm}\label{clm:third}
$\Theta_n(0)=\Theta_n(4)=n$ and
$\Theta_n(1)=\Theta_n(2)=\Theta_n(3)=n-1$.
\end{clm}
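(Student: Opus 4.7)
Plan: the proof hinges on two simple facts. First, for any graph $G$, we have $\theta(G)=\theta(C(G))+i(G)$, because each isolated vertex is in no edge and therefore must appear as its own singleton clique in every cover; conversely, combining any cover of $C(G)$ with the $i(G)$ forced singletons yields a cover of $G$. Second, if $H$ is triangle-free and has no isolated vertices, then $\theta(H)=|E(H)|$, since each edge is a maximal clique and must occur in the cover, while the edges alone cover every vertex. With these facts, $\Theta_n(m)$ reduces to maximizing $\theta(C(G))+(n-c(G))$ over graphs $G$ with $m$ edges.

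I would then proceed by small case analysis on $m\in\{0,1,2,3,4\}$. The case $m=0$ is immediate: $\Theta_n(0)=n$. For each $m\in\{1,2,3,4\}$, enumerate (up to isomorphism) the graphs on at most $2m$ vertices having $m$ edges and no isolated vertices, and for each compute $\theta$. A short check gives:
\begin{itemize}
\item $m=1$: only $K_2$ ($c=2,\theta=1$), hence $\Theta_n(1)=n-1$.
\item $m=2$: $P_3$ ($c=3,\theta=2$) beats $2K_2$ ($c=4,\theta=2$), hence $\Theta_n(2)=n-1$.
\item $m=3$: the configurations $K_3,P_4,K_{1,3},P_3+K_2,3K_2$ give totals $n-2,n-1,n-1,n-2,n-3$, hence $\Theta_n(3)=n-1$.
\end{itemize}

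The case $m=4$ is where the novel behaviour appears, and I would handle it using Mantel's theorem (Theorem~\ref{thm:mantel}). To achieve $\theta(G)=n$ we need $\theta(C(G))=c(G)$, which (combined with $\theta\le|E|$ for any triangle-free graph) forces $c(G)=4$ and $C(G)$ triangle-free with $4$ edges. By Mantel, $\mmax{4}=4$, and the unique triangle-free graph on $4$ vertices attaining this bound is $K_{2,2}=C_4$; this realizes $\Theta_n(4)\ge n$. For the upper bound, any $G$ with $m=4$ and $c(G)\ge 5$ satisfies $\theta(C(G))\le 4$ and hence $\theta(G)\le n-1$, while any $G$ with $c(G)=4$ and a triangle must be the ``paw'' (triangle plus a pendant edge), which has $\theta=2$ and total $n-2$. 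Thus $\Theta_n(4)=n$, with $C_4$ as the unique extremizer among graphs on $4$ non-isolated vertices.

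The only mildly delicate step is the enumeration at $m=4$; the main obstacle is ensuring no triangle-containing configuration on $4$ vertices has been overlooked. The small number of graphs with at most eight vertices and four edges makes this manageable, and Mantel's theorem streamlines the argument by immediately identifying $C_4$ as the only triangle-free candidate on four vertices.
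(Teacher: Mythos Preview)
Your proposal is correct and follows the same approach the paper indicates---enumeration of all edge arrangements for $m\in\{0,1,2,3,4\}$---only you carry it out explicitly and organize it via the decomposition $\theta(G)=\theta(C(G))+i(G)$, whereas the paper simply declares the claim trivial and omits the details. One small wording issue: the inequality $\theta(C(G))\le |E(C(G))|$ that you invoke holds because $C(G)$ has no isolated vertices (cover by edges), not because it is triangle-free; as stated, ``$\theta\le|E|$ for any triangle-free graph'' is false when isolated vertices are present, so you should rephrase that clause.
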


\ifpart
Claim~\ref{clm:third} is trivial, as it can be shown very
quickly through enumeration of all possible arrangements of 0, 1, 2, 3
and 4 edges.
\fi

\begin{clm}\label{clm:fourth}
$\Theta_n(m+1)\le\Theta_n(m)+1$.
\end{clm}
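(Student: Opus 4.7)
The plan is to reduce the bound to a single-edge operation on a specific extremal graph and exhibit a small modification of an optimal cover. First I would pick any graph $G'\in\mathcal{G}_n(m+1)$ that realizes the maximum, i.e., $\theta(G')=\Theta_n(m+1)$. Such a $G'$ exists because $\mathcal{G}_n(m+1)$ is finite. I then choose an arbitrary edge $e=(u,v)\in E(G')$ and form $G=G'\setminus\{e\}$, which is a graph in $\mathcal{G}_n(m)$ on the same vertex set; note the endpoints $u,v$ remain in $V$ even if they become isolated in $G$.

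Next I would exploit a minimum clique cover $C=\{C_1,\ldots,C_k\}$ of $G$ of size $k=\theta(G)$, which by definition of $\Theta_n(m)$ satisfies $k\leq\Theta_n(m)$. The key observation is that $C'=C\cup\{\{u,v\}\}$ is a clique cover of $G'$: every vertex of $V$ is already covered by $C$; every edge of $G$ (which persists in $G'$) is still covered by some $C_i$; the new edge $(u,v)$ is covered by the two-element set $\{u,v\}$; and $\{u,v\}$ is a clique in $G'$ because $(u,v)\in E(G')$ by construction. Hence $\theta(G')\leq|C'|\leq k+1$.

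Chaining the inequalities gives $\Theta_n(m+1)=\theta(G')\leq\theta(G)+1\leq\Theta_n(m)+1$, which is the claim. There is no serious obstacle here: the argument is a one-edge perturbation, and the only thing to be careful about is that augmenting a minimum cover by a singleton-edge clique indeed yields a legitimate clique cover, which follows directly from the definition given in the introduction (cliques in $C'$, union equal to $V$, every edge contained in some member). No assumption on $m$ beyond $0\leq m<\binom{n}{2}$ is needed, so the inequality holds throughout the relevant range and will be used repeatedly in Section~\ref{sec:pre-maximum} to control how $\Theta_n(m)$ can grow as edges are added.
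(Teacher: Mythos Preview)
Your proof is correct and follows essentially the same approach as the paper: delete an edge from a graph in $\mathcal{G}_n(m+1)$, take a minimum clique cover of the resulting graph in $\mathcal{G}_n(m)$, and add back the deleted edge as a two-element clique. The only cosmetic difference is that you start from an extremal $G'$ realizing $\Theta_n(m+1)$, whereas the paper bounds $\theta(G)$ for an arbitrary $G\in\mathcal{G}_n(m+1)$; both yield the same inequality.
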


\ifpart
\begin{proof}
Consider $G\in\mathcal{G}_n(m+1)$.
Choose any edge $e$ in $G$, and remove it (while keeping its end-points)
to obtain $G_0\in\mathcal{G}_n(m)$. Let $C_0$ be the smallest clique cover of
$G_0$, and so $|C_0|\leq\Theta_n(m)$. Let
$C = C_0 \cup \{e\}$. Since $C_0$ covered all of $G$ except for $e$,
$C$ by extension covers $G$, and $|C|=|C_0|+1\leq\Theta_n(m)+1$. We have
found a cover for $G$ with cardinality of at most $\Theta_n(m)+1$.
\qed
\end{proof}
\fi

\begin{lem}\label{lem:fifth}
$m\leq\mmax{n}\wedge G\in\mathcal{G}_n(m)\wedge\theta(G)=\Theta_n(m)
\implies G$ is triangle-free.
\end{lem}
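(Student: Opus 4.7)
The plan is to prove this by contradiction via an edge-swap argument. Suppose, for contradiction, that $G\in\mathcal{G}_n(m)$ with $m\leq\mmax{n}$ satisfies $\theta(G)=\Theta_n(m)$ but contains a triangle $T=\{a,b,c\}$. The goal is then to exhibit a graph $G'\in\mathcal{G}_n(m)$ with $\theta(G')>\theta(G)$, contradicting the maximality of $\theta(G)$.

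First I would use $m\leq\mmax{n}$ together with Mantel's theorem (Theorem~\ref{thm:mantel}) to guarantee that $G$ has many non-edges: the complement of $G$ has at least $\binom{n}{2}-\mmax{n}$ edges. From this pool I would pick a non-edge $(u,v)$ and define $G'=(G\setminus\{(b,c)\})\cup\{(u,v)\}$, preserving $n$ and $m$ while destroying $T$. The choice of $(u,v)$ is the crux: it must be made to avoid introducing any new triangle, and so that the new edge $(u,v)$ cannot be absorbed into any clique of a minimum cover of $G-(b,c)$.

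Next, I would show $\theta(G')>\theta(G)$ by treating the swap as two single-edge modifications and applying Claim~\ref{clm:fourth} in both directions. Deleting $(b,c)$ decreases $\theta$ by at most $1$; in fact, in the natural situation where a minimum cover of $G$ uses $\{a,b,c\}$ as a $3$-clique, the deletion forces that clique to split, so $\theta$ is non-decreasing. Inserting $(u,v)$ at a location where its endpoints share no common extension in a minimum cover of $G-(b,c)$ forces an additional clique, contributing $+1$. Choosing $(u,v)$ so these two contributions combine to a net $\geq 1$ yields the contradiction.

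The hard part will be guaranteeing that a swap target $(u,v)$ with the required properties exists in every structural regime. When $i(G)\geq 2$, a non-edge between two isolated vertices (or between a suitable isolated vertex and a non-isolated one) is straightforward; the harder case is when $i(G)$ is small and $G$ is dense near $T$, which will require a case analysis using $i(G)$, $s(G)$, and the local neighbourhood of the triangle, together with iterated applications of Claim~\ref{clm:fourth} to bound the cumulative change in $\theta$. This structural analysis of when the deletion effect and the insertion effect combine favorably is the main technical obstacle.
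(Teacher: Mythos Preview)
Your plan diverges substantially from the paper's proof, and the central step is not justified. The paper does \emph{not} modify $G$ by a local edge swap; instead it uses the global inequality $\theta(G)\le m-2+i(G)$ (a triangle lets three edges be covered by one clique) and then builds a comparison graph $G'$ \emph{from scratch}: a triangle-free graph on $c'\ge c(G)$ connected vertices together with the appropriate number of singletons, so that $\theta(G')=m+i(G')$. The only work is to ensure enough connected vertices are available to host $m$ edges triangle-free (Mantel), which is handled by the two cases $m\le\mmax{c(G)}$ and $m>\mmax{c(G)}$. No local control of minimum covers is ever needed.

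Your swap argument, by contrast, hinges on two assertions that do not hold in general. First, Claim~\ref{clm:fourth} (and its proof) only yields \emph{upper} bounds of the form $\theta(H\cup\{e\})\le\theta(H)+1$; it gives you no lower bound on $\theta(G')$, so ``iterated applications of Claim~\ref{clm:fourth}'' cannot by themselves produce $\theta(G')>\theta(G)$. Second, your claim that ``if a minimum cover of $G$ uses $\{a,b,c\}$, the deletion of $(b,c)$ forces that clique to split, so $\theta$ is non-decreasing'' is false. Take $G$ on vertices $\{a,b,c,x,y,z\}$ with triangles $\{a,b,c\}$, $\{a,b,x\}$, $\{a,c,y\}$ and $z$ isolated ($m=7\le\mmax{6}=9$). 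A minimum cover is $\{\{a,b,c\},\{a,b,x\},\{a,c,y\},\{z\}\}$, so $\theta(G)=4$; yet after deleting $(b,c)$ the cover $\{\{a,b,x\},\{a,c,y\},\{z\}\}$ suffices and $\theta$ drops to $3$. In such configurations no choice of a single added non-edge $(u,v)$ will push $\theta$ above $4$ unless you also change which triangle edge you delete, and you give no argument that some choice always succeeds. The missing idea is precisely the global bound $\theta(G)\le m-2+i(G)$ together with a direct construction of a triangle-free witness, which removes the need to track how $\theta$ reacts to local edits.
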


\begin{proof}
We will prove this Lemma by contrapositive. That is, we will show that if
$G\in\mathcal{G}_n(m)$, $m\leq\mmax{n}$, and $G$ contains
triangles, then $\theta(G)\neq\Theta_n(m)$.

Let $G\in\mathcal{G}_n(m)$, and assume $G$ has at least one triangle.
3 edges can be covered with 1 clique, so 
$\theta(G)\leq m-2+i(G)$.

{\bf Case 1:} $m>\mmax{c(G)}$.
We must first note that
$i(G)\neq0$, because if $i(G)=0$ then $c(G)=n$, so
$m>\mmax{n}$; this directly contradicts the assumptions of
this Lemma.

$\theta(G)\leq\mmax{c(G)}+i(G)$; because the
largest clique cover of any graph on $c(G)$ vertices is
$\mmax{c(G)}$, we need only add the singletons
to this bound to get an upper bound for $\theta(G)$.
Note that $c(G)\geq3$, since $G$ has a triangle. Consider a graph
$G_1\in\mathcal{G}_n(m)$ such that $c(G_1)=c(G)+1$ and
$i(G_1)=i(G)-1$. Such a graph can be constructed; we have more edges than can
fit on $c(G)$ vertices without triangles, so we can simply choose 1 edge
$\{u,v\}$ from a triangle in $G$, remove it, and replace it with $\{u,s\}$,
where $s$ is a singleton in $G$. We can write a similar bound for this graph:
$\theta(G_1)\leq\mmax{c(G_1)}+i(G_1)$. Let's
compare the two bounds. 

If $c(G)$ is even, then
$\mmax{(c(G)+1)}-\mmax{c(G)}=
{c(G)/2}$. Since $c(G)$ is even and $\geq3$, $c(G)\geq4$. So
$\mmax{(c(G)+1)}-\mmax{c(G)}\geq2$.

If $c(G)$ is odd,
$\mmax{(c(G)+1)}-\mmax{c(G)}=
{(c(G)+1)}/{2}$. Given that $c(G)\geq3$, this difference is,
again, $\geq2$. 

As a result, our upper bound for $\theta(G_1)$ is at least 1 more than that for 
$\theta(G)$. Of course, this isn't enough to prove that 
$\theta(G_1)\geq\theta(G)$. We can, however, repeat this process to get 
increasing upper bounds for $\theta(G_2)$ on $c(G_1)+1$ non-isolated vertices 
and $i(G_1)-1$ singletons, and so on until we reach a $G_\alpha$ such that 
$\mmax{c(G_\alpha)}\geq m$. Note that since $m\leq\mmax{n}$, this will
necessarily happen before or when we run out of singletons. 

If $\alpha=1$, then $\mmax{c(G)+1}\geq m$, so we can construct
triangle-free $H\in\mathcal{G}_{c(G)+1}(m)$. Let $I$ be a graph of $i(G)-1$
singletons, and let $G'=H\cup I$. $G'\in\mathcal{G}_n(m)$, and
$\theta(G')=m-1+i(G)$. Recall that $\theta(G)\leq m-2+i(G)$, so
$\theta(G')>\theta(G)$.

If $\alpha>1$, then we can
construct triangle-free $H\in\mathcal{G}_{c(G_\alpha)}(m)$. Let $I$ be the graph
of $i(G_\alpha)$ singletons, and let $G'=H\cup I$. $G'\in\mathcal{G}_n(m)$ and
$\theta(G')=m+i(G')$ by construction. Let $B_\beta$ 
denote the previously established upper bound for $\theta(G_\beta)$ 
($1\leq\beta\leq\alpha$), and let $B$ denote the upper bound established for 
$G$. Note that $m\geq\mmax{c(G_{\alpha-1})}+1$ or we would have stopped prior to 
$G_\alpha$, and that $i(G')=i(G_{\alpha-1})-1$. Thus, $\theta(G')\geq
B_{\alpha-1}>B\geq\theta(G)$, so $\theta(G')>\theta(G)$. 

Regardless of $\alpha$'s value, we have found $G'\in\mathcal{G}_n(m)$ such that 
$\theta(G')>\theta(G)$, so $\Theta_n(m)\neq\theta(G)$.

{\bf Case 2:} $m\leq\mmax{c(G)}$.
We can
construct a triangle-free graph $H\in\mathcal{G}_{c(G)}(m)$. Since
$H$ has no triangles, $\theta(H)\geq m$.  Let $I$ be the graph of
$i(G)$ singletons, and let $G'=H\cup I$. Then $G'\in\mathcal{G}_n(m)$
and $\theta(G')\geq m+i(G)>m+i(G)-2\geq\theta(G)$. Thus,
$\theta(G)\neq\Theta_n(m)$.
 
In either case, we have shown that if $m\leq\mmax{n}$, $G\in\mathcal{G}_n(m)$, and $G$ contains 
at least 1 triangle, then $\theta(G)\neq\Theta_n(m)$. Thus, if $m\leq\mmax{n}$
and $G\in\mathcal{G}_n(m)$, then $\theta(G)=\Theta_n(m)$ implies that $G$ is
triangle-free.
\qed
\end{proof}


\begin{lem}\label{lem:sixth}
If $m\leq\mmax{n}$ and $m=p^2$ or $m=p(p+1)$ for some positive 
integer $p$, then $\Theta_n(m)=m+n-2p$ or $\Theta_n(m)=m+n-2p-1$, respectively.
\end{lem}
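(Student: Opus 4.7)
The plan is to combine Lemma~\ref{lem:fifth} with Mantel's theorem. By Lemma~\ref{lem:fifth}, any $G \in \mathcal{G}_n(m)$ achieving $\theta(G) = \Theta_n(m)$ must be triangle-free, so every clique in any cover of $G$ has size at most two. Consequently the unique minimum cover of such a $G$ consists of its $m$ edges together with one singleton per isolated vertex, giving $\theta(G) = m + i(G) = m + n - c(G)$. Maximizing $\theta(G)$ therefore reduces to minimizing $c(G)$ over triangle-free graphs with $m$ edges and no isolated vertices.

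By Mantel's theorem this minimum value $c^*$ of $c(G)$ is the smallest integer with $\mmax{c^*} \geq m$. For $m = p^2$ we compute $\mmax{2p-1} = p^2 - p < p^2$ and $\mmax{2p} = p^2$, so $c^* = 2p$, attained by the complete bipartite graph $K_{p,p}$. For $m = p(p+1)$ we compute $\mmax{2p} = p^2 < p(p+1)$ and $\mmax{2p+1} = p(p+1)$, so $c^* = 2p+1$, attained by $K_{p,p+1}$. Both witnesses are bipartite, hence triangle-free, and have no isolated vertices, so they realize the lower bound on $c(G)$ with equality.

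Finally, the hypothesis $m \leq \mmax{n}$ is exactly what guarantees there is room for these subgraphs inside a graph on $n$ vertices: $\mmax{n} \geq p^2$ forces $n \geq 2p$, and $\mmax{n} \geq p(p+1)$ forces $n \geq 2p+1$. So the disjoint union of $K_{p,p}$ (respectively $K_{p,p+1}$) with the appropriate number of isolated points is an element of $\mathcal{G}_n(m)$ witnessing $\Theta_n(m) \geq m + n - 2p$ (respectively $m + n - 2p - 1$), and the minimization above supplies the matching upper bound. The argument is essentially mechanical; the only delicate step is verifying the Mantel gap $\mmax{c^*-1} < m \leq \mmax{c^*}$ at both parities, which is precisely what distinguishes the two cases in the statement and which fails for values of $m$ strictly between $p^2$ and $p(p+1)$ (those values are addressed in later results).
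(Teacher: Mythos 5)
Your proof is correct and follows essentially the same route as the paper: reduce to triangle-free maximizers (the paper leaves the appeal to Lemma~\ref{lem:fifth} implicit), use Mantel's theorem to force $c(G)\ge 2p$ (resp.\ $2p+1$), bound $\theta$ by the edges-plus-singletons cover, and realize the bound with $K_{p,p}$ (resp.\ $K_{p,p+1}$) together with isolated vertices. Your explicit check that $m\le\mmax{n}$ guarantees $n\ge 2p$ (resp.\ $n\ge 2p+1$) is a small point the paper glosses over, but the argument is the same.
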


\begin{proof} 
{\bf Case 1:} $m=p^2$.
Consider the complete bipartite graph
$K_{p,p}$. Since it has no triangles or singletons, 
$\theta(K_{p,p})=m$. Let $I$ be a
graph consisting of $n-2p$ singleton vertices, and let $G = K_{p,p}
\cup I$. $K_{p,p}$ has $2p$ vertices, and $I$ has $n-2p$ vertices, so
$G$ has $n$ vertices.  Similarly, $K_{p,p}$ has $m$ edges and $I$ has
$0$ edges, so $G$ has $m$ edges.  Thus, $G\in\mathcal{G}_n(m)$.
Moreover, since $G$ is triangle-free, $\theta (G)=m+n-2p$; that is,
$G$'s smallest clique cover is equal to its edge count plus its number
of singleton vertices. 

Let $H\in\mathcal{G}_n(m)$ such that $H$ is triangle-free, and let
$H_0$ be $H$ without its singletons. Then $H_0$ is also triangle-free,
and $H_0\in\mathcal{G}_{c(H)}(m)$. Mantel's Theorem shows that $m \leq
\frac{c(H)^2}{4}$, or identically that $c(H) \geq 2\sqrt{m}$. Since
$m=p^2$, this in turn is equivalent to $c(H)\geq 2p$. $H$ is not
necessarily bipartite, but it can be covered by one clique for each
edge (i.e., a clique consisting of the edge's incident vertices) plus
one clique for each singleton vertex. That is, there is a cover of $H$
with cardinality $m+n-c(H)\leq m+n-2p$.  Thus, $\theta(H)\leq m+n-2p$.
In conclusion, if $m=p^2$ and $m \leq \mmax{n}$ then
$\Theta_n(m)=m+n-2p$.

{\bf Case 2:} $m=p(p+1)$.
Consider the complete bipartite graph
$K_{p,p+1}$. Again, since it's bipartite, $\theta(K_{p,p+1})=m$.
$K_{p,p+1}$ has $2p+1$ vertices, so let $I$ be the graph consisting of
$n-2p-1$ singleton vertices, and let $G=K_ {p,p+1}\cup I$.
$G\in\mathcal{G}_n(m)$, and $\theta(G)=m+n-2p-1$. Similar to the
previous case, given triangle-free $H\in\mathcal{G}_n(m)$, we can
bound the number of non-isolated vertices in $H$:
$c(H)\geq2\sqrt{p^2+p}$. Since $c(H)$ is an integer, this bound can be
improved to $c(H)\geq\lceil2\sqrt{p^2+p}\rceil$. Obviously,
$2\sqrt{p^2}<2\sqrt{p^2+p}<2\sqrt{p^2+p+0.25}$. Identically,
$2p<2\sqrt {p^2+p}<2p+1$. So $\lceil2\sqrt{p^2+p}\rceil=2p+1$. Thus,
$c(H)\geq 2p+1$. Again, since $H$ can be covered
by its individual edges and singletons, $\theta(H)\leq
m+n-2p-1$. We have shown that if $m=p(p+1)$ and $m\leq\mmax{n}$ then
$\Theta_n(m)=m+n-2p-1$.
\qed
\end{proof}

\begin{lem}\label{lem:seventh}
If $m<\mmax{n}$ and $m=p^2$ or $m=p(p+1)$, then
$\Theta_n(m)=\Theta_n(m+1)$.
\end{lem}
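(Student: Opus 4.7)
The plan is to sandwich $\Theta_n(m+1)$ between $\Theta_n(m)$ and $\Theta_n(m)+1$, using the values of $\Theta_n(m)$ just computed in Lemma~\ref{lem:sixth}, and then to exclude the upper value via Lemma~\ref{lem:fifth} and Mantel's Theorem.

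First I would invoke Claim~\ref{clm:fourth} to get the immediate upper bound $\Theta_n(m+1)\leq\Theta_n(m)+1$. By Lemma~\ref{lem:sixth} this reads $\Theta_n(m+1)\leq m+n-2p+1$ in the case $m=p^2$, and $\Theta_n(m+1)\leq m+n-2p$ in the case $m=p(p+1)$. Since $m<\mmax{n}$ we have $m+1\leq\mmax{n}$, so Lemmas~\ref{lem:fifth} and~\ref{lem:sixth} still apply at $m+1$.

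For the matching lower bound I would exhibit a specific $G'\in\mathcal{G}_n(m+1)$. In the case $m=p^2$, take a triangle-free (in fact bipartite) graph $H$ on $2p+1$ vertices with $p^2+1$ edges; such $H$ exists because $p^2+1\leq p^2+p=\mmax{2p+1}$ for $p\geq 1$ (e.g.\ $K_{p,p+1}$ minus $p-1$ edges). Adjoining $n-2p-1$ isolated vertices gives $G'$ with $\theta(G')=(p^2+1)+(n-2p-1)=m+n-2p$. In the case $m=p(p+1)$, take a bipartite graph on $2p+2$ vertices with $p^2+p+1$ edges (e.g.\ $K_{p+1,p+1}$ minus $p$ edges), which is legal since $p^2+p+1\leq(p+1)^2=\mmax{2p+2}$, and again adjoin singletons; here $\theta(G')=m+n-2p-1$. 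Either way $\Theta_n(m+1)\geq\Theta_n(m)$.

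The main obstacle is ruling out the ``$+1$'' slot, and this is where Mantel is used tightly. Suppose for contradiction that $\Theta_n(m+1)=\Theta_n(m)+1$. Then by Lemma~\ref{lem:fifth} there is a triangle-free $G\in\mathcal{G}_n(m+1)$ with $\theta(G)=(m+1)+i(G)$ realising this value, which forces $c(G)=2p$ in the case $m=p^2$ and $c(G)=2p+1$ in the case $m=p(p+1)$. But Theorem~\ref{thm:mantel} applied to the non-isolated part $C(G)$ gives at most $\mmax{2p}=p^2=m$ edges in the first case and at most $\mmax{2p+1}=p^2+p=m$ edges in the second, contradicting that $G$ carries $m+1$ edges. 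Hence $\Theta_n(m+1)=\Theta_n(m)$, completing the proof. The only subtlety to verify carefully is that the relevant arithmetic identities $\mmax{2p}=p^2$ and $\mmax{2p+1}=p^2+p$ make the jump from $m$ to $m+1$ exactly straddle the Mantel threshold, which is precisely why these two families of $m$ are the ones for which $\Theta_n$ is flat.
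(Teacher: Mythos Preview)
Your proof is correct and follows essentially the same route as the paper: both arguments hinge on applying Mantel's theorem to the non-isolated part of a triangle-free $G\in\mathcal{G}_n(m+1)$ to force $c(G)\geq 2p+1$ (resp.\ $2p+2$), and both exhibit an explicit witness graph achieving $\Theta_n(m)$. The only difference is organizational: the paper argues the upper bound $\Theta_n(m+1)\leq\Theta_n(m)$ directly (Mantel gives $c(G)\geq 2p+1$, hence $\theta(G)\leq (m+1)+(n-2p-1)=\Theta_n(m)$), whereas you first invoke Claim~\ref{clm:fourth} and then eliminate the ``$+1$'' slot by contradiction---the detour through Claim~\ref{clm:fourth} is harmless but unnecessary, since the Mantel step already delivers the sharp bound.
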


\begin{proof}
{\bf Case 1:} $m=p^2$. Let $G\in\mathcal{G}_n(m+1)$ be triangle-free. Mantel's 
Theorem grants $c(G)\geq2\sqrt{p^2+1}$. Since $c(G)$ is an integer,
$c(G)\geq\lceil2\sqrt{p^2+1}\rceil$. Obviously $2\sqrt{p^2+1}>2p$, so
$c(G)\geq2p+1$; $G$ has $m+1$ edges and at most $n-2p-1$ singletons.
Thus, $\theta(G)\leq m+1+n-2p-1=m+n-2p$; in other words,
$\theta(G)\leq\Theta_n(m)$. A graph $G\in\mathcal{G}_n(m+1)$ for which 
$\theta(G)=\Theta_n(m)$ can be constructed easily. For example, let $I$ be the 
graph of $n-2p$ singletons, and let $G=K_{p,p}\cup I\cup\{e\}$, where $e$ 
is an edge with one incident vertex in $K_{p,p}$ and the other in $I$.

{\bf Case 2:} $m=p(p+1)$. Let $G\in\mathcal{G}_n(m+1)$ be triangle-free. 
Mantel's Theorem, combined with the fact that $c(G)$ is an integer, grants 
$c(G)\geq\lceil2\sqrt{p^2+p+1}\rceil$. 
$2\sqrt{p^2+p+0.25}<2\sqrt{p^2+p+1}<2\sqrt{p^2+2p+1}$, so
$2p+1<2\sqrt{p^2+p+1}<2p+2$. Therefore, $c(G)\geq2p+2$. $G$ is a graph with
$m+1$ edges and at most $n-2p-2$ singletons. As such, $\theta(G)\leq
m+n-2p-1$; $\theta(G)\leq\Theta_n(m)$. A graph $G\in\mathcal{G}_n(m+1)$ for 
which $\theta(G)=\Theta_n(m)$ can be constructed in much the same way as above.

In both cases, we have shown that $\Theta_n(m+1)=\Theta_n(m)$.
\qed
\end{proof}

\begin{lem}\label{lem:eighth}
$(\forall m\leq\mmax{n})$ $\Theta_{n+1}(m)=\Theta_n(m)+1$.
\end{lem}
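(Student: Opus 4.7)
The plan is to prove the two inequalities $\Theta_{n+1}(m)\ge\Theta_n(m)+1$ and $\Theta_{n+1}(m)\le\Theta_n(m)+1$ separately, and the work splits unevenly between them.

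For the $\ge$ direction, I would take an extremal $G\in\mathcal{G}_n(m)$ with $\theta(G)=\Theta_n(m)$ and adjoin a fresh isolated vertex $v$ to produce $G'\in\mathcal{G}_{n+1}(m)$. Because the only clique containing $v$ is $\{v\}$ itself, every clique cover of $G'$ splits uniquely as a clique cover of $G$ together with the singleton $\{v\}$, so $\theta(G')=\theta(G)+1=\Theta_n(m)+1$. This is the easy half.

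For the $\le$ direction, let $G\in\mathcal{G}_{n+1}(m)$ be extremal, i.e. $\theta(G)=\Theta_{n+1}(m)$. Since $m\le\mmax{n}\le\mmax{n+1}$, Lemma~\ref{lem:fifth} forces $G$ to be triangle-free; hence every clique in a minimum cover is an edge or a singleton, and the minimum cover uses exactly one clique per edge of $G$ together with one singleton per isolated vertex, giving $\theta(G)=m+i(G)$. The key step is to show $i(G)\ge1$, so that one can delete an isolated vertex of $G$ and land in $\mathcal{G}_n(m)$ with $\theta$ decreased by exactly one. To secure this, I would exhibit a concrete comparison graph $G^*$: let $c^*$ be the smallest vertex count admitting a triangle-free graph with $m$ edges (Mantel gives $\mmax{c^*}\ge m$, and since $m\le\mmax{n}$ we get $c^*\le n$), take a triangle-free witness $H$ on $c^*$ vertices with $m$ edges, and union it with $n+1-c^*\ge1$ isolated vertices to obtain $G^*\in\mathcal{G}_{n+1}(m)$ with $\theta(G^*)=m+(n+1-c^*)\ge m+1$. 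Extremality of $G$ then yields $m+i(G)=\theta(G)\ge\theta(G^*)\ge m+1$, so $i(G)\ge1$. Removing one isolated vertex of $G$ produces $G'\in\mathcal{G}_n(m)$ with $\theta(G')=m+i(G)-1=\theta(G)-1$, whence $\Theta_{n+1}(m)=\theta(G)\le\Theta_n(m)+1$.

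I do not expect a serious obstacle: once Lemma~\ref{lem:fifth} has trimmed things down to the triangle-free case, the whole argument is just bookkeeping of edges and isolated vertices via the identity $\theta(H)=m+i(H)$ for triangle-free $H$. The only mildly delicate point is the lower bound $i(G)\ge1$, which is why I would make the comparison graph $G^*$ explicit and invoke $m\le\mmax{n}$ precisely at that step rather than appeal vaguely to extremality.
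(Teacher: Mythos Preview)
Your proof is correct and follows essentially the same approach as the paper: both use Lemma~\ref{lem:fifth} to reduce to triangle-free graphs, invoke the identity $\theta(G)=m+i(G)$ in that case, and exploit that the minimum number of non-isolated vertices needed to host $m$ triangle-free edges (your $c^*$, the paper's $n'$) depends only on $m$. The only difference is organizational---the paper computes $\Theta_n(m)=m+n-n'$ and $\Theta_{n+1}(m)=m+(n+1)-n'$ directly, whereas you split into two inequalities by adding and then deleting an isolated vertex---but the substance is identical.
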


\begin{proof}

Let $G_0\in\mathcal{G}_n(m)$ such that $\theta(G_0) = \Theta_n(m)$.
Lemma~\ref{lem:fifth} ensures that $G_0$ is triangle-free.
Therefore, $\theta(G_0)=m+i(G_0)$; $G_0$ could be any triangle-free graph on 
$n$ vertices and $m$ edges which maximizes $i(G)$. Maximizing $i(G_0)$ is 
identical to minimizing $c(G_0)$, as $i(G_0)=n-c(G_0)$. So let $n'$ be the
smallest number of vertices which can contain $m$ edges without a triangle.
$\theta(G_0)=m+n-n'$.
Similarly, let $G_1\in\mathcal{G}_{n+1}(m)$ be a triangle-free graph such that
$\theta(G_1)=\Theta_{n+1}(m)$. Again, $c(G_1)=n'$, as the number of edges is the
same, so $\Theta_{n+1}(m)=\theta(G_1)=m+(n+1)-n'=\theta(G_0)+1=\Theta_n(m)+1$.
\qed

\end{proof}

We are now ready to describe $\Theta_n(m)$ for $m\le\mmax{n}$. From
Lemma~\ref{lem:eighth}, we can first take the entirety of $\Theta$
for $n-1$ vertices (up to its maximum) and add 1 to every dependent value.
This establishes the portion of $\Theta_n$ ranging from $0$ to
$\mmax{(n-1)}$ edges, and
gives us a current right-most point at
$(\mmax{(n-1)},\mmax{(n-1)}+1)$. Then, Lemma~\ref{lem:seventh} grants
that the next point is $(\mmax{(n-1)}+1,\mmax{(n-1)}+1)$. From here,
the plot must make it to $(\mmax{n},\mmax{n})$; the increase in cover
size is equal to the increase in edge count. In other words, from this
point on the cover size must increase by 1 for each edge, on average.
This, combined with Claim~\ref{clm:fourth}, shows that it actually
must increase by exactly 1 per additional edge up to
$(\mmax{n},\mmax{n})$.
 
So, essentially, to get the left side of $\Theta_n$, simply take the
left side of $\Theta_{n-1}$, shift it upward by 1, then add the 
portion of the line $m=n$ ranging from $(\mmax{(n-1)}+1, \mmax{(n-1)}+1)$ 
up to the new maximum at $(\mmax{n},\mmax{n})$. We need
only find the horizontal length of this added segment,
$d=\mmax{n}-(\mmax{(n-1)}+1)$, to determine the pattern.

If $n$ is even, then $n-1$ is odd. So
$\mmax{n}=\frac{n^2}{4}$, and $\mmax{(n-1)}=\frac{(n-1)^2-1}{4}$.  So
$\mmax{n}-(\mmax{(n-1)}+1)=\frac{n^2}{4}-(\frac{(n-1)^2-1}{4}+1)$.
Simplification grants $d=\frac{n}{2}-1=\lfloor\frac{n}{2}\rfloor-1$. 

Similarly, if $n$ is odd then $n-1$ is even. As such,
$\mmax{n}=\frac{n^2-1}{4}$ and $\mmax{(n-1)}=\frac{(n-1)^2}{4}$.
Again, subtract and simplify to get
$d=\frac{n-1}{2}-1=\lfloor\frac{n}{2}\rfloor-1$. Note that this is the
same difference that was attained from the even number of vertices
directly preceding this odd $n$, and that in both cases, increasing
$n$ by two increases $d$ by 1.

Let
$\delta_d$ denote the sequence of pairs $(\Delta m,\Delta \Theta_n)$, comprised of 
the pair $(+1,+0)$ followed by $d$ pairs $(+1,+1)$. For
example, $\delta_2$ would be $\{(+1,+0),(+1,+1),(+1,+1)\}$. Recall
that the left-most points of $\Theta_n(m)$ are $(0,n)$, $(1,n-1)$, $(2,n-1)$,
$(3,n-1)$, $(4,n)$. This is the entire left side of $\Theta_4$.
To extend this to show the left side on 5 vertices, we need
only add $\delta_{\lfloor\frac{5-1}{2}\rfloor-1}$ (or $\delta_1$) to
the right of these points. $\delta_1 = \{(+1,+0),(+1,+1)\}$, so we get
2 additional points. Our last point was $(4,n)$. Addition of $(+1,+0)$
grants our first new point, $(5,n)$. Then, addition of $(+1,+1)$
grants $(6,n+1)$. To extend this to be the left side of the graph on 6
vertices, we would then add $\delta_2$ since
$\lfloor\frac{6}{2}\rfloor-1=2$. Then $\delta_2$ to get to 7 vertices,
then $\delta_3$ twice to get to 8 and 9 vertices, $\delta_4$ twice to
get 10 and 11 vertices, and so on.

The pattern is clearest if we start from the point $(2,n-1)$,
after which we have the sequence of changes:
$\delta_1,\delta_1,\delta_2,\delta_2,\delta_3,\delta_3,
\ldots,\delta_p,\delta_p,\ldots$,
until the maximum $(\mmax{n},\mmax{n})$ is reached.

\section{$\Theta_n(m)$ for $m\ge\mmax{n}$}\label{sec:post-maximum}

As in the previous section, the material is technical, but the reader
will find it easier to follows by keeping in mind
Figure~\ref{fig:results-8-gamma}.

\begin{figure}[h]

\begingroup
  \makeatletter
  \providecommand\color[2][]{%
    \GenericError{(gnuplot) \space\space\space\@spaces}{%
      Package color not loaded in conjunction with
      terminal option `colourtext'%
    }{See the gnuplot documentation for explanation.%
    }{Either use 'blacktext' in gnuplot or load the package
      color.sty in LaTeX.}%
    \renewcommand\color[2][]{}%
  }%
  \providecommand\includegraphics[2][]{%
    \GenericError{(gnuplot) \space\space\space\@spaces}{%
      Package graphicx or graphics not loaded%
    }{See the gnuplot documentation for explanation.%
    }{The gnuplot epslatex terminal needs graphicx.sty or graphics.sty.}%
    \renewcommand\includegraphics[2][]{}%
  }%
  \providecommand\rotatebox[2]{#2}%
  \@ifundefined{ifGPcolor}{%
    \newif\ifGPcolor
    \GPcolorfalse
  }{}%
  \@ifundefined{ifGPblacktext}{%
    \newif\ifGPblacktext
    \GPblacktexttrue
  }{}%
  \let\gplgaddtomacro\g@addto@macro
  \gdef\gplbacktext{}%
  \gdef\gplfronttext{}%
  \makeatother
  \ifGPblacktext
    \def\colorrgb#1{}%
    \def\colorgray#1{}%
  \else
    \ifGPcolor
      \def\colorrgb#1{\color[rgb]{#1}}%
      \def\colorgray#1{\color[gray]{#1}}%
      \expandafter\def\csname LTw\endcsname{\color{white}}%
      \expandafter\def\csname LTb\endcsname{\color{black}}%
      \expandafter\def\csname LTa\endcsname{\color{black}}%
      \expandafter\def\csname LT0\endcsname{\color[rgb]{1,0,0}}%
      \expandafter\def\csname LT1\endcsname{\color[rgb]{0,1,0}}%
      \expandafter\def\csname LT2\endcsname{\color[rgb]{0,0,1}}%
      \expandafter\def\csname LT3\endcsname{\color[rgb]{1,0,1}}%
      \expandafter\def\csname LT4\endcsname{\color[rgb]{0,1,1}}%
      \expandafter\def\csname LT5\endcsname{\color[rgb]{1,1,0}}%
      \expandafter\def\csname LT6\endcsname{\color[rgb]{0,0,0}}%
      \expandafter\def\csname LT7\endcsname{\color[rgb]{1,0.3,0}}%
      \expandafter\def\csname LT8\endcsname{\color[rgb]{0.5,0.5,0.5}}%
    \else
      \def\colorrgb#1{\color{black}}%
      \def\colorgray#1{\color[gray]{#1}}%
      \expandafter\def\csname LTw\endcsname{\color{white}}%
      \expandafter\def\csname LTb\endcsname{\color{black}}%
      \expandafter\def\csname LTa\endcsname{\color{black}}%
      \expandafter\def\csname LT0\endcsname{\color{black}}%
      \expandafter\def\csname LT1\endcsname{\color{black}}%
      \expandafter\def\csname LT2\endcsname{\color{black}}%
      \expandafter\def\csname LT3\endcsname{\color{black}}%
      \expandafter\def\csname LT4\endcsname{\color{black}}%
      \expandafter\def\csname LT5\endcsname{\color{black}}%
      \expandafter\def\csname LT6\endcsname{\color{black}}%
      \expandafter\def\csname LT7\endcsname{\color{black}}%
      \expandafter\def\csname LT8\endcsname{\color{black}}%
    \fi
  \fi
    \setlength{\unitlength}{0.0500bp}%
    \ifx\gptboxheight\undefined%
      \newlength{\gptboxheight}%
      \newlength{\gptboxwidth}%
      \newsavebox{\gptboxtext}%
    \fi%
    \setlength{\fboxrule}{0.5pt}%
    \setlength{\fboxsep}{1pt}%
\begin{picture}(7200.00,5040.00)%
    \gplgaddtomacro\gplbacktext{%
      \csname LTb\endcsname%
      \put(682,704){\makebox(0,0)[r]{\strut{}$0$}}%
      \put(682,1137){\makebox(0,0)[r]{\strut{}$2$}}%
      \put(682,1570){\makebox(0,0)[r]{\strut{}$4$}}%
      \put(682,2002){\makebox(0,0)[r]{\strut{}$6$}}%
      \put(682,2435){\makebox(0,0)[r]{\strut{}$8$}}%
      \put(682,2868){\makebox(0,0)[r]{\strut{}$10$}}%
      \put(682,3301){\makebox(0,0)[r]{\strut{}$12$}}%
      \put(682,3733){\makebox(0,0)[r]{\strut{}$14$}}%
      \put(682,4166){\makebox(0,0)[r]{\strut{}$16$}}%
      \put(682,4599){\makebox(0,0)[r]{\strut{}$18$}}%
      \put(4236,484){\makebox(0,0){\strut{}$\mmax{n}=16$}}%
      \put(814,484){\makebox(0,0){\strut{}$0$}}%
      \put(1242,484){\makebox(0,0){\strut{}$2$}}%
      \put(1670,484){\makebox(0,0){\strut{}$4$}}%
      \put(2097,484){\makebox(0,0){\strut{}$6$}}%
      \put(2525,484){\makebox(0,0){\strut{}$8$}}%
      \put(2953,484){\makebox(0,0){\strut{}$10$}}%
      \put(3381,484){\makebox(0,0){\strut{}$12$}}%
      \put(3809,484){\makebox(0,0){\strut{}$14$}}%
      \put(4664,484){\makebox(0,0){\strut{}$18$}}%
      \put(5092,484){\makebox(0,0){\strut{}$20$}}%
      \put(5520,484){\makebox(0,0){\strut{}$22$}}%
      \put(5947,484){\makebox(0,0){\strut{}$24$}}%
      \put(6375,484){\makebox(0,0){\strut{}$26$}}%
      \put(6803,484){\makebox(0,0){\strut{}$28$}}%
      \put(4771,4819){\makebox(0,0){\strut{}$\gamma_3$}}%
      \put(5413,4819){\makebox(0,0){\strut{}$\gamma_3$}}%
      \put(5947,4819){\makebox(0,0){\strut{}$\gamma_2$}}%
      \put(6375,4819){\makebox(0,0){\strut{}$\gamma_2$}}%
    }%
    \gplgaddtomacro\gplfronttext{%
      \csname LTb\endcsname%
      \put(176,2651){\rotatebox{-270}{\makebox(0,0){\strut{}size of largest minimal clique cover}}}%
      \put(3808,154){\makebox(0,0){\strut{}number of edges on 8 vertices}}%
    }%
    \gplbacktext
    \put(0,0){\includegraphics{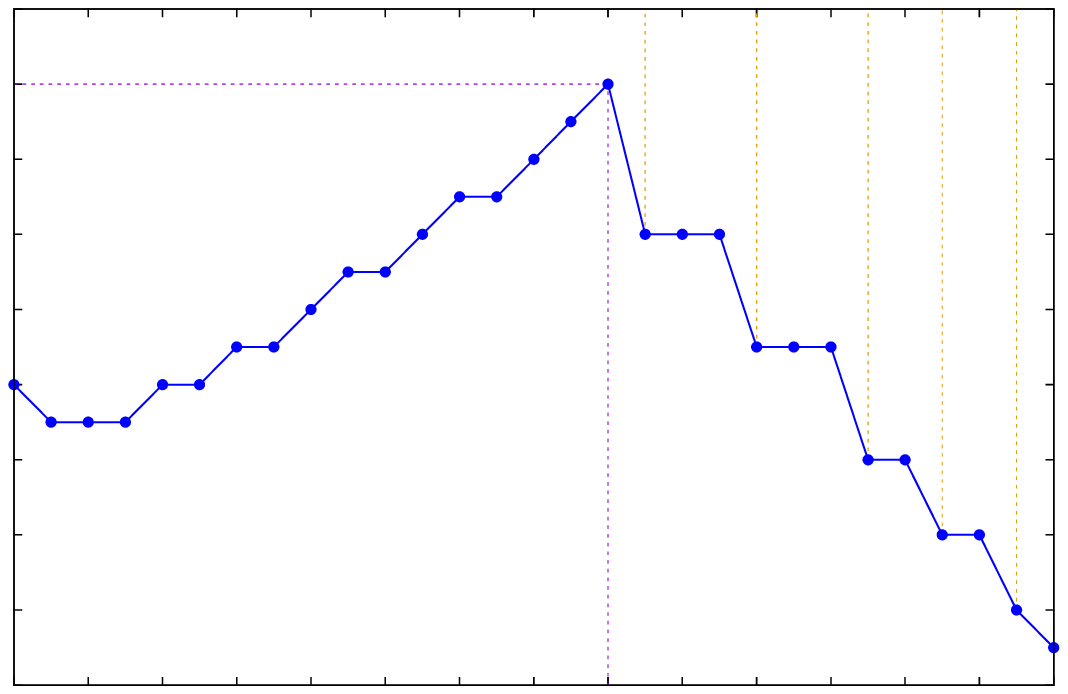}}%
    \gplfronttext
  \end{picture}%
\endgroup
\caption{Right side of the graph for
$\Theta_8(m)$}\label{fig:results-8-gamma}
\end{figure}



In the style of the $\delta_d$'s discussed at the end of the
previous section, let $\gamma_d$ denote the change in coordinates
$(-1,+d)$ followed by $d-1$ iterations of the change
$(-1,+0)$. For example, $\gamma_3$ is $\{(-1,+3),(-1,+0),(-1,+0)\}$.
If we start from $({n\choose 2}+1,0)$ and move left, we can construct
the right side of the graph with the sequence of changes
$\gamma_1,\gamma_1,\gamma_2,\gamma_2,\gamma_3,\gamma_3,
\ldots,\gamma_p,\gamma_p,\ldots$ until the maximum at
$(\mmax{n},\mmax{n})$ is reached (with the
``$(-1,+0)$''s trailing the last $\gamma$ omitted).

Recall the definitions of $I(G),i(G),C(G),c(G),S(G),s(G)$ from the
introduction: isolated vertices and their count, non-isolated vertices
and their count, and stars and their count, respectively.  Also recall
that if $C(G)$ is not a clique, then $\hat{G}$ denotes the subgraph of
$G$ which results from removing all stars, along with their edges,
from $G$. If $C(G)$ is a clique, then $\hat{G}$ is the graph which
results from replacing $C(G)$ with a single new vertex. 

Additionally, in order to keep the notation as simple as possible, 
we will assume that the intersection or union of a vertex set and 
a graph includes edges whenever convenient.

We prove below that removing $S(G)$ does not change $\theta(G)$,
but an identical proof works for any subset of $S(G)$.

\begin{lem}\label{lem:ninth}
\begin{enumerate}[label=(\roman*)]
	\item $\theta(\hat{G})=\theta(G)$. 
	\item The removal of any subset of $S(G)$ does not decrease $\theta$.
	\item Due to the above, if $G\in\mathcal{G}_{n+1}(m+n)$ and 
		$s(G)\neq0$ then $\theta(G)\leq\Theta_{n}(m)$
\end{enumerate}
\end{lem}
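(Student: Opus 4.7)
The plan is to prove the three parts in order, with (i) and (ii) sharing the same augmentation argument and (iii) following from (ii) after analyzing a star's degree.

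For part (i), I would split on the two cases built into the definition of $\hat{G}$. In the case where $C(G)$ is not a clique, $\hat{G} = G - S(G)$. The inequality $\theta(\hat{G}) \le \theta(G)$ is the easy direction: starting from a minimum clique cover of $G$, delete every vertex of $S(G)$ from each clique; the resulting collection still covers $\hat{G}$ because edges of $\hat{G}$ lie among non-stars and were already covered, and vertices of $\hat{G}$ remain in at least one (possibly shrunken) clique. The reverse direction $\theta(G) \le \theta(\hat{G})$ is the substantive step. I would take a minimum cover $\mathcal{C}'$ of $\hat{G}$ and adjoin $S(G)$ to a suitable clique (or to every clique lying in $C(G)\setminus S(G)$). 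The key structural observation is that any clique of size at least two in $\mathcal{C}'$ is contained in $C(G)\setminus S(G) \subseteq C(G)$, and since each star is adjacent to every vertex of $C(G)$, the enlarged set $K \cup S(G)$ is again a clique. The hypothesis that $C(G)$ is not a clique guarantees $C(G)\setminus S(G)\ne\emptyset$, so at least one such $K$ exists and can absorb $S(G)$, thereby covering all star--star and star--to--non-star edges without increasing the cover size. The case where $C(G)$ is itself a clique is immediate: $\theta(G)=i(G)+1$ and $\hat{G}$ is a graph on $i(G)+1$ isolated vertices, so both sides equal $i(G)+1$.

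For part (ii), the same augmentation goes through verbatim with $S(G)$ replaced by any $S'\subseteq S(G)$: the set $K\cup S'$ is still a clique whenever $K\subseteq C(G)$, and every edge incident to $S'$ is covered by the augmented cliques. For part (iii), I would pick any $s\in S(G)$ and apply (ii) with $S'=\{s\}$ to conclude $\theta(G)=\theta(G-s)$. When $c(G)=n+1$ (i.e.\ $G$ has no isolated vertices), the star $s$ has degree exactly $n$, hence $G-s\in\mathcal{G}_n(m)$ and $\theta(G)=\theta(G-s)\le\Theta_n(m)$ by definition of $\Theta$. When $c(G)<n+1$, $G$ has an isolated vertex $v$ and $\deg(s)=c(G)-1<n$; here the plan is to peel off $v$ using $\theta(G)=1+\theta(G-v)$ so that $G-v$ still has $s$ as a star but now lives in $\mathcal{G}_n(m+n)$, and then combine this with the star-removal equality from (ii) and an inductive comparison of $\Theta$-values at related parameters.

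The main obstacle is the reverse inequality in (i): one must verify that the augmentation by $S(G)$ never creates a non-clique. Singletons in $\mathcal{C}'$ that correspond to vertices of $I(G)$ must be left untouched (isolated vertices have no edges to stars), while singletons corresponding to vertices of $C(G)\setminus S(G)$ that became isolated after removing $S(G)$ must be handled by adjoining the stars they were connected to in $G$. Checking these bookkeeping cases, and in particular confirming that the non-emptiness of $C(G)\setminus S(G)$ ensures at least one legitimate absorbing clique, is the place where the argument is most delicate. I expect the isolated-vertex sub-case of (iii) to be the other sticking point, since a naive degree count gives $G-s$ too many edges to apply $\Theta_n(m)$ directly, and a careful inductive or structural argument on $i(G)$ is needed to close the gap.
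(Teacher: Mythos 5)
Your treatment of (i) and (ii) is essentially the paper's argument: one direction by restricting each clique of a minimum cover of $G$ to $\hat{G}$, the other by augmenting the cliques of a cover of $\hat{G}$ with $S(G)$, using the fact that stars are adjacent to all of $C(G)$ and that $C(G)$ not being complete guarantees something survives outside $S(G)$. One caution on the wording of the reverse direction: a \emph{single} absorbing clique $K\cup S(G)$ covers the star--star edges and the edges from stars into $K$, but not the edges from stars to non-star vertices outside $K$; the augmentation must be applied to \emph{every} clique of the cover that lies inside $C(G)$ (leaving the singletons of $I(G)$ untouched), exactly as the paper does. Your parenthetical ``or to every clique lying in $C(G)\setminus S(G)$'' and your bookkeeping paragraph show you know this, but the sentence claiming that one $K$ ``covers all star--to--non-star edges'' is false as literally written and should be dropped.

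The genuine gap is in (iii), and it is worth being precise about why your fallback plan cannot be completed. For the case $i(G)=0$ your argument is correct and is in fact the only case the paper ever uses (it is rederived inside Lemma~\ref{lem:eighteenth}); the paper itself offers no proof of (iii) beyond ``due to the above.'' But for $i(G)>0$ the inequality fails under the paper's definition of $S(G)$ (stars need only dominate $C(G)$): take $G$ on $6$ vertices consisting of $K_5$ minus two disjoint edges plus one isolated vertex. Then $G\in\mathcal{G}_{6}(8)=\mathcal{G}_{n+1}(m+n)$ with $n=5$, $m=3$, and $s(G)=1$, yet $\hat{G}$ restricted to the non-stars is a $4$-cycle, so $\theta(G)=4+1=5$, while $\Theta_5(3)=4$ by Claim~\ref{clm:third}. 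So no ``careful inductive or structural argument on $i(G)$'' can close your second sub-case: the statement needs the hypothesis $i(G)=0$ (equivalently, a star of degree $n$), and your proposed peeling step $\theta(G)=1+\theta(G-v)$ followed by induction would require $1+\Theta_{n-1}(m+1)\le\Theta_n(m)$, which the same example refutes. The correct resolution is to state and use (iii) only with a full-degree star, as in your first sub-case.
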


\begin{proof}
First, note that $\theta(G)=\theta(C(G))+i(G)$, and similarly that
every vertex in $I(G)$ is also a singleton in $\hat{G}$, so
$\theta(\hat{G})=\theta(C(G)\cap\hat{G})+i(G)$ if $C(G)$ is not a
clique, and $1+i(G)$ otherwise. As such, we can assume
without loss of generality that $G$ has no singletons (i.e.,
$C(G)=G$).

If $C(G)$ is complete, then it can be covered by 1 clique, so
replacing it with a singleton vertex has no effect on $\theta$;
$\theta(\hat{G})=\theta(G)$.

So let $G$ be a graph with no singletons such that $C(G)$ is not
complete, and let $C$ be a minimal clique cover of $G$ consisting
entirely of maximal cliques. We know such a cover exists
from~\cite{indeterminates-2017}. Let $C_0=\bigcup_{c\in
C}\{c\cap\hat{G}\}$. The elements of $C_0$ are still cliques in
$\hat{G}$, as any pair of vertices in $\hat{G}$ which were connected
in $G$ are still connected in $\hat{G}$. Moreover, $C_0$ covers
$\hat{G}$; every edge and vertex of $\hat{G}$ was covered by $C$, and
the only difference from $C$ to $C_0$ is the removal of the vertices
and edges which were not included in $\hat{G}$. It remains to be seen
that the elements of $C_0$ are nonempty and unique. Let $c_0$ be an
element of $C_0$. Then there is a maximal $c\in C$ such that
$c_0=c\cap\hat{G}=c-S(G)$. Moreover, since $C(G)$ is incomplete, there
is a vertex in $C(G)$ which is not in $S(G)$, and since $S(G)$ is
fully connected to any element of $C(G)$, $S(G)$ cannot be a maximal
clique. Thus every $c\in C$ contains a vertex not in $S(G)$, so every
$c_0\in C_0$ is nonempty. Moreover, $G$ has no singletons so every
$c\in C$ is a subset of $C(G)$, and as such $(\forall c\in C)(c\supset
S(G))$. So each $c_0\in C_0$ is the result of removing the entirety
of $S(G)$, from a clique $c\in C$.
Therefore, if two elements of $C_0$ are identical, then their
corresponding elements from $C$ were identical, so $C$ is not a
smallest clique cover, as a duplicate clique could be removed to find
a smaller one. Thus, every element of $C_0$ is nonempty and unique. We
have found a cover $C_0$ of $\hat{G}$ such that $|C_0|=|C|$. 

Assume that there is a cover $C_0'$ of $\hat{G}$, composed of maximal
cliques, such that $|C_0'|<|C_0|$. Let $C'=\bigcup_{c\in C_0'}\{c\cup
S(G)\}$. Obviously, $C'$ covers any edge or vertex in $\hat{G}$. So
let $v$ be a vertex in $G$ that is not in $\hat{G}$. $v\in S(G)$,
so $v$ is covered by every clique in $C'$. Let $e$ be an edge in $G$
but not in $\hat{G}$. Then $e$ either has two incident vertices in
$S(G)$ or one in $S(G)$ and the other in $\hat{G}$. If both incident
vertices are in $S(G)$, then $e$ is covered by every clique in $C'$,
as $(\forall c\in C')(c\supset S(G))$. Alternately, if one incident
vertex is in $\hat{G}$, then there is a cover $c_0\in C_0'$ which
contains this vertex; thus there is a $c'\in C'$ such that $c'=c_0\cup
S(G)$ which covers $e$. So $C'$ covers $G$, and
$|C'|=|C_0'|<|C_0|=|C|$; $C$ is not a smallest cover of $G$, which
directly contradicts its definition.

We have found a cover for $\hat{G}$ with the same cardinality as 
a minimal cover of $G$, and shown that a smaller cover for $\hat{G}$ 
cannot exist. Thus, $\theta(\hat{G})=\theta(G)$. 
\qed
\end{proof}

\begin{clm}\label{clm:tenth}
If $G\in\mathcal{G}_n(m)$ and
$\hat{G}\in\mathcal{G}_{\hat{n}}(\hat{m})$, then $\hat{n}=n-s(G)$, and 
$$
\hat{m}=m-{s(G)\choose2}-s(G)(c(G)-s(G))=m-\frac{s(G)(2c(G)-s(G)-1)}{2}.
$$
\end{clm}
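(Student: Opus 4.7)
The plan is to proceed by direct bookkeeping, working from the definition of $\hat{G}$ in the generic case where $C(G)$ is not itself a clique (so $\hat{G}$ is literally $G$ with the stars deleted). The vertex count is then immediate: since exactly the $s(G)$ stars are removed and nothing is added, $\hat{n}=n-s(G)$.

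The heart of the matter is counting deleted edges. First I would record the small structural observation that $S(G)$ induces a clique in $G$: by definition every $v\in S(G)$ is adjacent to every other vertex of $C(G)$, and in particular to every other element of $S(G)$. Next, because no isolated vertex is incident to any edge, every deleted edge has at least one endpoint in $S(G)$ and the other endpoint in $C(G)$. This partitions the deleted edges into two disjoint classes: edges with both endpoints in $S(G)$, of which there are exactly $\binom{s(G)}{2}$ by the clique observation; and edges joining a star to a non-star vertex of $C(G)$, of which there are exactly $s(G)(c(G)-s(G))$, since each star is adjacent to each of the $c(G)-s(G)$ non-star vertices in $C(G)$. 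Summing and subtracting from $m$ gives the first expression for $\hat{m}$, and a routine algebraic simplification of $\binom{s(G)}{2}+s(G)(c(G)-s(G))$ to $\frac{s(G)(2c(G)-s(G)-1)}{2}$ yields the second.

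There is essentially no obstacle here beyond bookkeeping; the only subtlety is verifying that the two edge classes above exhaust the deleted edges, which follows because $I(G)$ contributes no edges and because the endpoints of any deleted edge must include at least one removed (star) vertex. I would also briefly remark on the exceptional case in which $C(G)$ is a clique, where $\hat{G}$ is defined by collapse rather than deletion; there $s(G)=c(G)$ and the edge formula still gives $\hat{m}=m-\binom{c(G)}{2}=0$, which agrees with $\hat{G}$ being a graph of isolated vertices (with the caveat that the vertex formula is then interpreted modulo the single replacement vertex $v_{C(G)}$).
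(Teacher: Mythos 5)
Your proof is correct and follows essentially the same route as the paper's: partition the deleted edges into those inside $S(G)$ (a clique, giving $\binom{s(G)}{2}$) and those joining $S(G)$ to the $c(G)-s(G)$ remaining vertices of $C(G)$ (giving $s(G)(c(G)-s(G))$), then simplify. Your additional remark on the exceptional case where $C(G)$ is a clique is a point the paper's proof silently skips, and your caveat there is apt, since the collapse to $v_{C(G)}$ makes the stated vertex count off by the one replacement vertex.
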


\ifpart
\begin{proof}
$\hat{n}=n-s(G)$ follows directly from the definitions of $\hat{G}$
and $s(G)$.

In order to get $\hat{G}$ from $G$, we remove every vertex in $S(G)$
and their incident edges. There are two types of edges which are
removed: those connecting vertices in $S(G)$ to each other, and those
connecting vertices in $S(G)$ to vertices not in $S(G)$. $S(G)$ is a
complete subgraph of $G$, so its removal results in the removal of
${s(G)\choose2}$ edges. Every vertex in $S(G)$ is also fully connected
to the remaining vertices in $C(G)$, of which there are $c(G)-s(G)$.
Thus, removing these edges lowers the edge count by $s(G)(c(G)-s(G))$.
This grants $\hat{m}=m-{s(G)\choose2}-s(G)(c(G)-s(G))$.
Simplification grants $\hat{m}=m-\frac{s(G)(2c(G)-s(G)-1)}{2}$.
\qed
\end{proof}
\fi

\begin{clm}\label{clm:eleventh}
$c(\hat{G})\leq c(G)-s(G)$, and $i(\hat{G})\geq i(G)$
\end{clm}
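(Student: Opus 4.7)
The plan is to split on the two cases in the definition of $\hat{G}$ and track exactly which vertices of $G$ become isolated.

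First I would dispose of the special case where $C(G)$ is a clique. Then every vertex in $C(G)$ is connected to every other vertex in $C(G)$, so $S(G)=C(G)$ and $s(G)=c(G)$; hence $c(G)-s(G)=0$. By definition $\hat{G}$ replaces $C(G)$ by a single new vertex $v_{C(G)}$, which is isolated in $\hat{G}$. Thus $c(\hat{G})=0=c(G)-s(G)$ and $i(\hat{G})=i(G)+1\ge i(G)$, so both inequalities hold.

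For the main case, assume $C(G)$ is not a clique, so $\hat{G}=G-S(G)$. The vertex set of $\hat{G}$ decomposes as $I(G)\cup(C(G)\setminus S(G))$. Since removing vertices cannot create new edges, every $v\in I(G)$ remains isolated in $\hat{G}$, giving $i(\hat{G})\ge i(G)$. For the first inequality, observe that any non-isolated vertex of $\hat{G}$ must lie in $C(G)\setminus S(G)$ (vertices of $I(G)$ have no incident edges at all). Hence
\[
c(\hat{G})\le |C(G)\setminus S(G)|=c(G)-s(G),
\]
as required. Note the inequality can be strict: a vertex $v\in C(G)\setminus S(G)$ whose neighbourhood in $G$ lies entirely inside $S(G)$ becomes isolated in $\hat{G}$.

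There is no serious obstacle here; the claim is essentially a definition chase, and the only subtlety is remembering that the clique-case definition of $\hat{G}$ introduces a brand-new singleton, which is exactly what makes the $i$-inequality hold (with slack) in that case while preserving the $c$-inequality as equality with both sides zero.
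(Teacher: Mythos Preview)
Your proof is correct and follows essentially the same approach as the paper: both argue via the set inclusions $C(\hat{G})\subseteq C(G)\setminus S(G)$ and $I(\hat{G})\supseteq I(G)$, noting that any slack comes from vertices of $C(G)\setminus S(G)$ whose neighbours all lie in $S(G)$. You are simply more explicit in separating out the clique case, whereas the paper absorbs it into the same inclusion argument (there $C(G)\setminus S(G)=\varnothing$ and the new vertex $v_{C(G)}$ lands in $I(\hat{G})$), but the substance is identical.
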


\ifpart
\begin{proof}
From the definitions of $S(G)$ and $\hat{G}$, it is clear that
$C(\hat{G})\subseteq C(G)-S(G)$, so $c(\hat{G})\leq c(G)-s(G)$.
Similarly, $I(\hat{G})\supseteq I(G)$, so $i(\hat{G})\geq i(G)$. Note
that both inequalities result from fact that $\hat{G}$ may have
singletons which were not isolated in $G$; specifically, if a vertex
$v\in C(G)$ only has edges incident to vertices in $S(G)$, then $v\in
I(\hat{G})$.
\qed
\end{proof}
\fi

\begin{clm}\label{clm:twelfth}
If $C(G)$ is not complete, then $\theta(G)\leq\mmax{c(G)-s(G)}+i(G)$.
If $C(G)$ is complete, then $\theta(G)=1+i(G)$. 
\end{clm}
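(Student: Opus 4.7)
The plan is to treat the two cases separately, each reducing to a short application of Lemma~\ref{lem:ninth}. For the case $C(G)$ complete, the single clique $C(G)$ covers every edge, and each vertex of $I(G)$ must be placed in its own singleton clique (it has no incident edge), so $\{C(G)\}\cup\{\{v\}:v\in I(G)\}$ is a cover of size $1+i(G)$, giving $\theta(G)\le 1+i(G)$. The matching lower bound follows because the $i(G)$ isolated vertices must contribute $i(G)$ distinct singleton cliques and at least one further clique is needed to cover the (non-empty) edge set of $C(G)$. As a sanity check, Lemma~\ref{lem:ninth} yields $\theta(G)=\theta(\hat G)$, and in this case $\hat G$ consists of $i(G)+1$ isolated vertices, so independently $\theta(\hat G)=1+i(G)$.

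For the case $C(G)$ not complete, first apply Lemma~\ref{lem:ninth}(i) to replace $\theta(G)$ by $\theta(\hat G)$. Since $\hat G=G-S(G)$ and no vertex of $I(G)$ has an edge to $C(G)-S(G)$ in $G$, the graph $\hat G$ is the disjoint union of the induced subgraph $G[C(G)-S(G)]$ on $c(G)-s(G)$ vertices and the $i(G)$ originally isolated vertices $I(G)$. Therefore
\[
  \theta(\hat G)\;=\;\theta\bigl(G[C(G)-S(G)]\bigr)+i(G).
\]
To finish, bound $\theta(G[C(G)-S(G)])\le\mmax{c(G)-s(G)}$ via the global maximum property of $\Theta$ recalled in the introduction, namely $\Theta_k(m)\le\mmax{k}$ for every $m$ when $k\ge 4$. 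Combining these gives $\theta(G)\le\mmax{c(G)-s(G)}+i(G)$.

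The main obstacle is the small-$k$ regime $k:=c(G)-s(G)<4$, where the global-maximum bound is not automatic. The values $k=0$ and $k=1$ can be ruled out outright: $k=0$ would make $C(G)=S(G)$ a clique, contradicting the incompleteness of $C(G)$, and $k=1$ would make the lone non-star of $C(G)$ adjacent to every other vertex of $C(G)=S(G)\cup\{v\}$, hence itself a star. The remaining cases $k\in\{2,3\}$ admit only a few configurations of $G[C(G)-S(G)]$ (any triangle inside would force one of its vertices to be a star), and the bound is verified for each of them by inspection under the standing hypothesis $n\ge 4$, which ensures enough isolated vertices or stars are present to absorb the slack between $k$ and $\mmax{k}$.
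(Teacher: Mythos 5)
Your overall route is the same as the paper's: pass to $\hat G$ via Lemma~\ref{lem:ninth}, split off the $i(G)$ original singletons, and bound what lives on the remaining $c(G)-s(G)$ vertices by the global maximum $\mmax{c(G)-s(G)}$. You are right to single out the weak point: vertices of $C(G)-S(G)$ whose only neighbours are stars become singletons in $\hat G$, so the bound ``any graph on $k$ vertices has a clique cover of size at most $\mmax{k}$'' is only available for $k\ge 4$. (The paper's own proof passes over this in silence, asserting $\theta(\hat G)\le\mmax{c(G)-s(G)}+i(G)$ directly.) Your eliminations of $k=0$ and $k=1$ are correct, and the complete case is fine.

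The gap is your treatment of $k=c(G)-s(G)\in\{2,3\}$: no ``inspection'' can verify the bound there, because it is false. For $k=2$ the two non-stars $u,v$ cannot be adjacent (an edge $uv$ would make each of them adjacent to all of $C(G)$, hence a star), so each is joined only to stars, and $\hat G$ consists of $i(G)+2$ isolated vertices; Lemma~\ref{lem:ninth} then gives $\theta(G)=i(G)+2>i(G)+1=i(G)+\mmax{2}$. Concretely, $G=K_4$ minus one edge has $c(G)=4$, $s(G)=2$, $i(G)=0$, $\theta(G)=2$, while the claimed bound is $\mmax{2}=1$. Similarly, for $k=3$ with no edges among the three non-stars, $G=K_{1,3}$ has $c(G)=4$, $s(G)=1$, $i(G)=0$ and $\theta(G)=3>\mmax{3}=2$. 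Your remark that isolated vertices or stars ``absorb the slack'' cannot help: the original singletons contribute $i(G)$ to both sides, and the stars never enter the bound. So what you have uncovered is not a fixable omission in your own write-up but a needed side condition on the statement itself: the inequality as stated holds for $c(G)-s(G)\ge 4$, and in the degenerate cases one only gets the weaker $\theta(G)\le (c(G)-s(G))+i(G)$.
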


\ifpart
\begin{proof}
Consider $\hat{G}$ for any graph $G$ such that $C(G)$ is not complete.
Every vertex in $I(G)$ is still a singleton in $\hat{G}$. Thus, all of
the edges in $\hat{G}$ are confined to the remaining vertices:
$c(G)-s(G)$ of them. So $\theta(\hat{G})\leq\mmax{c(G)-s(G)}+i(G)$.
Moreover, since $C(G)$ is not complete, we know from
Lemma~\ref{lem:ninth} that $\theta(G)=\theta(\hat{G})$, so
$\theta(G)\leq\mmax{c(G)-s(G)}+i(G)$.
For any graph $G$ such that $C(G)$ is complete, $\theta(C(G))=1$, so
$\theta(G)=1+i(G)$.
\qed
\end{proof}
\fi

\begin{clm}\label{clm:thirteenth}
If $G$ has a triangle, let $G_0$ be the graph which results from
removing two of the edges in said triangle.
$\theta(G_0)\geq\theta(G)-1$.
\end{clm}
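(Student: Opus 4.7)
The plan is to extend a minimum clique cover of $G_0$ to a clique cover of $G$ by adding exactly one new clique, which will give $\theta(G)\leq\theta(G_0)+1$, equivalently $\theta(G_0)\geq\theta(G)-1$.

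Let the triangle in $G$ have vertices $u,v,w$ with edges $\{u,v\},\{v,w\},\{u,w\}$, and (without loss of generality) suppose the two edges removed to form $G_0$ are $\{u,v\}$ and $\{v,w\}$, while $\{u,w\}$ is retained. The key observation is that, although these two edges are absent from $G_0$, all three vertices $u,v,w$ remain present, so any minimum clique cover $C_0$ of $G_0$ still covers every vertex of $G$ and every edge of $G$ except possibly $\{u,v\}$ and $\{v,w\}$.

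Now the crucial point: in $G$ itself, the set $\{u,v,w\}$ is a clique (it is the original triangle). Therefore
$$C := C_0 \cup \bigl\{\{u,v,w\}\bigr\}$$
is a collection of cliques in $G$. It covers all vertices (via $C_0$), all edges of $G_0$ (via $C_0$), and the two missing edges $\{u,v\},\{v,w\}$ (via the added triangle clique). Hence $C$ is a clique cover of $G$ of size at most $\theta(G_0)+1$, so $\theta(G)\leq\theta(G_0)+1$, which is the desired inequality.

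There is essentially no obstacle here; the argument is immediate once one notices that the two removed edges together with the retained edge originally formed a triangle, so the deleted structure can be re-covered by a single triangle clique added to the cover of $G_0$. The same argument works regardless of which two of the three triangle edges are chosen for removal, since in every case the three triangle vertices survive in $G_0$ and form a clique in $G$.
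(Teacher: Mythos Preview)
Your proof is correct and is essentially the same approach the paper takes: the paper simply remarks that the claim ``doesn't require proof; we could remove the whole triangle and still decrease $\theta$ by at most [one], as it cannot take more than 1 clique to cover a clique.'' Your argument is exactly this observation written out in full---a minimum cover of $G_0$ plus the single triangle clique $\{u,v,w\}$ covers $G$.
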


\ifpart
This claim doesn't require proof; we could remove the whole triangle
and still decrease $\theta$ by at most, as it cannot take more than 1
clique to cover a clique.
\fi

\begin{lem}\label{lem:fourteenth}
If $m\geq\mmax{n}$, $G\in\mathcal{G}_n(m)$, and $i(G)\neq0$ then
$\theta(G)\neq\Theta_n(m)$. 
\end{lem}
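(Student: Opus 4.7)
My plan is to upper-bound $\theta(G)$ by applying Theorem~\ref{thm:lovasz} to the non-isolated subgraph $C(G)$, and to lower-bound $\Theta_n(m)$ by invoking Lov\'asz's sharpness clause at a nearby edge count combined with the fact that $\Theta_n$ is non-increasing on $m\geq\mmax{n}$. A short algebraic comparison then forces the two bounds to be separated by at least one.

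First I will set $i:=i(G)\geq 1$, $c:=n-i$, and $k_C:=\binom{c}{2}-m$, and let $t_C$ be the largest integer with $t_C(t_C-1)\leq k_C$. Applying Theorem~\ref{thm:lovasz} to $C(G)\in\mathcal{G}_c(m)$ gives $\theta(C(G))\leq k_C+t_C$, and since each isolated vertex of $G$ must appear as its own singleton clique in any cover, $\theta(G)=\theta(C(G))+i\leq k_C+t_C+i$. Moreover, $t_C(t_C-1)\leq k_C\leq\binom{c}{2}$ rules out $t_C\geq c$, so $t_C\leq c-1=n-i-1$.

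Next I will set $k:=\binom{n}{2}-m$ and let $t$ be the largest integer with $t(t-1)\leq k$, so $k\in[t^2-t,\,t^2+t-1]$. Define $k^*:=t^2$ if $k\geq t^2$ and $k^*:=t^2-t$ otherwise; in both sub-cases $k^*+t\geq k+1$, since $k$ sits at most $t-1$ below $t^2+t$ or $t^2$ respectively. Writing $m^*:=\binom{n}{2}-k^*$, we have $m^*\geq m\geq\mmax{n}$, so Theorem~\ref{thm:lovasz}'s sharpness clause yields $\Theta_n(m^*)=k^*+t$; combining this with the non-increasing behavior of $\Theta_n$ past $\mmax{n}$ gives $\Theta_n(m)\geq\Theta_n(m^*)=k^*+t\geq k+1$.

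It remains to verify $k_C+t_C+i\leq k$. Since $k-k_C=\binom{n}{2}-\binom{c}{2}=i(2n-i-1)/2$, this reduces to $2t_C\leq i(2n-i-3)$; using $t_C\leq n-i-1$ it suffices to check $2(n-i-1)\leq i(2n-i-3)$, which rearranges to $(i-1)(2n-i-2)\geq 0$, holding for every $i\geq 1$ and $n\geq 4$. Stringing everything together, $\theta(G)\leq k < k+1\leq\Theta_n(m)$, so $\theta(G)\neq\Theta_n(m)$. The only real obstacle is pinning down this algebraic comparison uniformly in $i$ and $n$, but as just shown it collapses to a single nonnegativity check.
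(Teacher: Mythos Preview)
Your argument is correct and takes a genuinely different route from the paper's own proof. The paper proceeds constructively: given $G$ with $i(G)\neq 0$, it exhibits another graph $G'\in\mathcal{G}_n(m)$ with $\theta(G')>\theta(G)$. When $C(G)$ is complete it moves one edge so that it now touches a former singleton; when $C(G)$ is not complete it deletes two edges of a triangle (losing at most one from $\theta$ by Claim~\ref{clm:thirteenth}) and then attaches two new edges from a non-adjacent pair $u,v\in C(G)$ to an isolated vertex $s$, so that $\{u,s\}$ and $\{v,s\}$ are forced maximal cliques. Either way, a strictly larger $\theta$ is realized in $\mathcal{G}_n(m)$.

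You never touch the graph. Instead you bound $\theta(G)$ from above by applying Theorem~\ref{thm:lovasz} to $C(G)$ and adding the $i$ singletons, and you bound $\Theta_n(m)$ from below by invoking the sharpness clause of Theorem~\ref{thm:lovasz} at the nearest $k^*\in\{t^2-t,\,t^2\}$ not exceeding $k$, together with the monotonicity of $\Theta_n$ on $[\mmax{n},\binom{n}{2}]$. The strict gap then drops out of the identity $k-k_C=i(2n-i-1)/2$ and the crude but adequate estimate $t_C\le c-1$, which reduces everything to $(i-1)(2n-i-2)\ge 0$.

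Your route is tidier in that it avoids the case split on whether $C(G)$ is complete and requires no graph surgery; it also yields the quantitative by-product $\theta(G)\le k=\binom{n}{2}-m$ whenever $i(G)\ge 1$. The trade-off is that you lean on two external inputs the paper's proof does not need here, namely the sharpness half of Lov\'asz's theorem and the non-increasing behavior of $\Theta_n$ past $\mmax{n}$, so your argument is less self-contained within the section.
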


\ifpart
\begin{proof}
Let $m\geq\mmax{n}$, and let $G\in\mathcal{G}_n(m)$ such that $i(G)\neq0$. Then
$m$ edges must fit on $n-i(G)$ vertices, so $m\leq{n-i(G)\choose2}$.

{\bf Case 1:} If $m={n-i(G)\choose2}$, then $C(G)$ is complete, so
$\theta(G)=1+i(G)$. Let $\{u,v\}$ be an edge in $C(G)$, and let $s$ be a
singleton in $I(G)$. Remove the edge $\{u,v\}$, and replace it with the edge
$\{u,s\}$ to attain graph $G'$. Since $s$ is part of exactly 1 edge in $G'$,
this edge is a maximal clique, so it must be in any cover of $G'$. To cover the 
remainder of $C(G')$, exactly 2 cliques are necessary: one including $v$ and not
$u$, and the other including $u$ and not $v$. Thus,
$\theta(G')=3+i(G')=2+i(G)>\theta(G)$.

{\bf Case 2:} Alternately, if $\mmax{n}\leq m<{n-i(G)\choose2}$, then $C(G)$ is
not complete. Thus, there is a pair of vertices $u$ and $v$ in $C(G)$ such that
$\{u,v\}$ is not an edge in $G$. Since $m\geq\mmax{n}>\mmax{n-1}$, there is at
least 1 triangle in $G$. Let $\{x,y,z\}$ be a triangle in $G$. Remove the edges
$\{x,y\}$ and $\{y,z\}$ to obtain graph $G_0$. We know from
Lemma~\ref{clm:thirteenth} that $\theta(G_0)\geq\theta(G)-1$. Next, choose
some singleton $s$ in $G_0$, and add to $G_0$ the edges $\{u,s\}$ and $\{v,s\}$
to get graph $G'$. Since $s$ was isolated and $\{u,v\}$ was not an edge in
$G_0$, these two new edges are maximal cliques. Thus,
$\theta(G')=\theta(G_0)+2\geq\theta(G)+1$.

In both cases, we have found a graph $G'\in\mathcal{G}_n(m)$ such that
$\theta(G')>\theta(G)$. Thus, $\theta(G)\neq\Theta_n(m)$.
\qed
\end{proof}
\fi

\begin{lem}\label{lem:fifteenth}
For $G\in\mathcal{G}_n(m)$, if $1<m<{n\choose2}$ and $C(G)$ is complete then
$\theta(G)\neq\Theta_n(m)$. 
\end{lem}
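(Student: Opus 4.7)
The plan is, for each $G$ satisfying the hypotheses, to exhibit a graph $G'\in\mathcal{G}_n(m)$ with $\theta(G')>\theta(G)$; this immediately gives $\theta(G)\neq\Theta_n(m)$.

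First I unpack the hypotheses. Because $C(G)$ is a clique we have $m={c(G)\choose 2}$ and $\theta(G)=1+i(G)$. The bound $m>1$ forces $c(G)\geq 3$, and $m<{n\choose 2}$ forces $c(G)<n$, so $i(G)\geq 1$. Fix any edge $\{u,v\}\subseteq C(G)$ and any singleton $s\in I(G)$, and form $G'$ from $G$ by deleting the edge $\{u,v\}$ and inserting the edge $\{u,s\}$. Then $G'\in\mathcal{G}_n(m)$.

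Now I evaluate $\theta(G')$. The vertex $s$ has degree one in $G'$, so the only clique of $G'$ containing the edge $\{u,s\}$ is $\{u,s\}$ itself; every cover of $G'$ must therefore contain this clique, and it covers no other edge. The edges remaining inside $C(G)$ form $K_{c(G)}$ with a single edge removed, which is not complete and hence requires at least $2$ cliques, while exactly $2$ suffice via $C(G)-\{u\}$ and $C(G)-\{v\}$. Adding a singleton clique for each of the remaining $i(G)-1$ isolated vertices, I obtain $\theta(G')=1+2+(i(G)-1)=i(G)+2>1+i(G)=\theta(G)$.

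I do not anticipate a real obstacle. The one delicate point is the lower bound $\theta(G')\geq i(G)+2$: one must check that the forced clique $\{u,s\}$ cannot double as a cover of any edge of the modified $C(G)$, which is immediate because $s$ is adjacent only to $u$, and that $K_{c(G)}$ with one edge removed really has clique cover number $2$ rather than $1$, which holds since that graph is not complete. The argument is a close analogue of Case~1 of Lemma~\ref{lem:fourteenth} but slightly simpler, as no triangle-removal step is required.
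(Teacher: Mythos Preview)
Your proof is correct. The construction of $G'$ and the verification that $\theta(G')=i(G)+2>i(G)+1=\theta(G)$ are both sound; in particular the lower bound argument is airtight, since the forced clique $\{u,s\}$ covers no edge of $K_{c(G)}$ minus $\{u,v\}$, and that graph genuinely needs two cliques because no single clique can contain both $u$ and $v$.

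The paper proceeds differently. It splits according to whether $m\le\mmax{n}$ or $m\ge\mmax{n}$ and then simply cites earlier lemmas: for $m\le\mmax{n}$ it notes that $C(G)$ complete with $c(G)\ge 3$ forces a triangle and invokes Lemma~\ref{lem:fifth}; for $m\ge\mmax{n}$ it notes $i(G)\neq 0$ and invokes Lemma~\ref{lem:fourteenth}. Your argument instead gives a single uniform construction, which is exactly the one appearing in Case~1 of the paper's proof of Lemma~\ref{lem:fourteenth}; you observed this yourself. The advantage of the paper's route is brevity once Lemmas~\ref{lem:fifth} and~\ref{lem:fourteenth} are in hand; the advantage of yours is that it is self-contained, avoids the case split on $\mmax{n}$, and in particular sidesteps the comparatively heavy machinery of Lemma~\ref{lem:fifth}.
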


\ifpart
\begin{proof}
{\bf Case 1:} Let $1<m\leq\mmax{n}$, and consider $G\in\mathcal{G}_n(m)$ such 
that $C(G)$ is complete. Since $m>1$, there are at least 3 vertices in $C(G)$, 
so $G$ contains a triangle. Thus, by Lemma~\ref{lem:fifth},
$\theta(G)\neq\Theta_n(m)$.

{\bf Case 2:} Let $\mmax{n}\leq m<{n\choose2}$, and consider
$G\in\mathcal{G}_n(m)$ such that $C(G)$ is complete. Then $c(G)<n$, because it
would take ${n\choose2}$ edges to make a complete graph on $n$ edges. As such,
$i(G)\neq0$, so by Lemma~\ref{lem:fourteenth}, $\theta(G)\neq\Theta_n(m)$.
\qed
\end{proof}
\fi

\begin{lem}\label{lem:sixteenth}
$\Theta_n(\mmax{n}+1)=\mmax{n-1}$
\end{lem}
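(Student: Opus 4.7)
My plan is to prove matching bounds. For the \emph{lower bound}, take $G^{\ast}=K_{\lfloor n/2\rfloor,\lceil n/2\rceil}\cup\{e\}$, where $e$ is an edge joining two vertices $b_{1},b_{2}$ of the larger part $B$. Then $G^{\ast}\in\mathcal{G}_{n}(\mmax{n}+1)$, and the added edge creates exactly $\lfloor n/2\rfloor$ triangles $\{a_{i},b_{1},b_{2}\}$, one for each $a_{i}$ in the smaller part $A$, all sharing $e$. The maximal cliques of $G^{\ast}$ are these triangles together with the $\lfloor n/2\rfloor(\lceil n/2\rceil-2)$ edges $(a_{i},b_{j})$ with $j\notin\{1,2\}$; each such edge is itself a maximal clique (its endpoints have no common neighbour) and must appear in any cover. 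Covering the remaining edges then forces exactly the $\lfloor n/2\rfloor$ triangles, yielding $\theta(G^{\ast})=\lfloor n/2\rfloor(\lceil n/2\rceil-1)=\mmax{n-1}$, which matches in both parities after a short calculation.

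For the \emph{upper bound}, let $G\in\mathcal{G}_{n}(\mmax{n}+1)$ realise $\Theta_{n}(\mmax{n}+1)$. Since $n\ge 4$ forces $\mmax{n}+1<\binom{n}{2}$, Lemma~\ref{lem:fourteenth} gives $i(G)=0$ and Lemma~\ref{lem:fifteenth} gives that $C(G)=G$ is not complete. Claim~\ref{clm:twelfth} then yields
\[
\theta(G)\le\mmax{c(G)-s(G)}+i(G)=\mmax{n-s(G)},
\]
which is already at most $\mmax{n-1}$ as soon as $s(G)\ge 1$, matching the construction.

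The main obstacle is the remaining subcase $s(G)=0$, where Claim~\ref{clm:twelfth} only delivers the loose bound $\mmax{n}$ and, as one can check directly, Theorem~\ref{thm:lovasz} also overshoots $\mmax{n-1}$ (by roughly $(n-4)/2$ for even $n$). In this subcase no vertex has degree $n-1$, yet by Mantel's Theorem $G$ still contains a triangle. My plan is to build an explicit clique cover of size at most $\mmax{n-1}$: pick a vertex $v$ of minimum degree (at most $\lceil n/2\rceil$ by averaging, since $2m/n$ is close to $n/2$), cover the edges incident to $v$ and the edges inside $N(v)$ by taking cliques of the form $\{v\}\cup C$ where $C$ ranges over an edge clique cover of $G[N(v)]$, and then cover the residual edges of $G-v$ via Claim~\ref{clm:twelfth} applied to the smaller graph on $V\setminus\{v\}$. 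The technical heart is verifying that the resulting count does not exceed $\mmax{n-1}$ in the worst case. I expect this to require a Mantel-style stability observation, namely that any $G$ with $s(G)=0$, $i(G)=0$ and exactly $\mmax{n}+1$ edges is structurally close to the extremal graph $K_{\lfloor n/2\rfloor,\lceil n/2\rceil}\cup\{e\}$ used in the lower bound, so that the triangle produced by the extra edge lies inside a book of $\lfloor n/2\rfloor$ triangles sharing a common edge, which is precisely the structure that yields the savings needed to push the cover down to $\mmax{n-1}$.
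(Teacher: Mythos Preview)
Your lower-bound construction is correct and coincides with the paper's: add one edge inside the larger side of $K_{\lfloor n/2\rfloor,\lceil n/2\rceil}$ and check that $\theta=\mmax{n-1}$. Your upper-bound shortcut for $s(G)\ge 1$ via Claim~\ref{clm:twelfth} is also valid (and in fact cleaner than anything the paper writes down for that situation).

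The genuine gap is the case $s(G)=0$, and your own phrasing (``the main obstacle'', ``I expect this to require'') already signals that the argument is not complete. Concretely, the plan ``cover the residual edges of $G-v$ via Claim~\ref{clm:twelfth}'' does not yield a usable bound. Claim~\ref{clm:twelfth} gives
\[
\theta(G-v)\;\le\;\mmax{\,(n-1)-s(G-v)\,}+i(G-v),
\]
and nothing in your setup forces $s(G-v)\ge 1$; if $s(G-v)=0$ you only get $\theta(G-v)\le\mmax{n-1}$, which is already the \emph{target} for $\theta(G)$ and leaves no budget for the edges at $v$. Even granting $s(G-v)\ge 1$, the arithmetic still misses: for even $n$ one has $\mmax{n-1}-\mmax{n-2}=(n-2)/2$, while the minimum degree can be exactly $n/2$, so $\mmax{n-2}+\deg(v)$ overshoots $\mmax{n-1}$ by one unless you can additionally guarantee a triangle through $v$. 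The appeal to a ``Mantel-style stability observation'' is a hope, not an argument; the near-extremal graphs here are not all close to $K_{\lfloor n/2\rfloor,\lceil n/2\rceil}\cup\{e\}$ in the sense you need.

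The paper sidesteps all of this by induction on $n$ rather than by invoking Claim~\ref{clm:twelfth}. Assuming $\Theta_n(\mmax{n}+1)=\mmax{n-1}$, it takes $G\in\mathcal{G}_{n+1}(\mmax{n+1}+1)$, removes a vertex $v$ of minimum degree (together with a few extra edges when necessary) so that the resulting $G'$ lies in $\mathcal{G}_n(\mmax{n}+1)$ or just to the right of it, applies the inductive hypothesis to get $\theta(G')\le\mmax{n-1}$, and then adds the removed edges back one at a time, each raising $\theta$ by at most one. The point is that the inductive hypothesis supplies the sharp bound on the smaller graph that Claim~\ref{clm:twelfth} cannot. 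There is still a case analysis (for odd $n$ one must separate whether the minimum degree is $(n-1)/2$ or $(n+1)/2$, and in the latter case locate a low-degree vertex that lies in a triangle), but every branch closes with the exact count $\mmax{n-1}+\lfloor n/2\rfloor=\mmax{n}$.

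If you want to salvage your direct approach, the missing ingredient is precisely a replacement for the inductive hypothesis: you would need an independent proof that $\theta(G-v)\le\mmax{n-2}$ whenever $G\in\mathcal{G}_n(\mmax{n}+1)$ and $v$ has degree $\le\lfloor n/2\rfloor$, which is essentially the lemma itself one level down. Induction is the natural way to get it.
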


\begin{proof}
It can be shown quickly through enumeration of all arrangements of
edges that this is true for 3 and 4 vertices. We will prove it for
larger integers through induction.

Assume the Lemma holds for some even $n\geq4$. Let
$G\in\mathcal{G}_{n+1}(\mmax{n+1}+1)$ such that $i(G)=0$. $G$ has
degree sum $D=2\mmax{n+1}+2$. Let $v$ be a vertex of minimum degree in
$G$. For the sake of contradiction, assume $deg(v)>n/2$. Then
$deg(v)\geq n/2+1$, so every vertex in $G$ has degree $\geq n/2+1$.
Therefore, $D\geq(n+1)(n/2+1)=\frac{n^2+3n+2}{2}$.  Moreover, since
$n\geq4$,
$D\geq\frac{n^2+2n+6}{2}=\frac{n^2+2n+1}{2}+5/2>2\mmax{n+1}+2=D$; that
is, $D>D$. The assumption that $deg(v)>n/2$ led to a contradiction, so
$deg(v)\leq n/2$. Remove $v$, all of its incident edges, and
$n/2-deg(v)$ additional edges to construct graph $G'$. Since $n$ is
even, $\mmax{n+1}-\mmax{n}=\mmax{n}-\mmax{n-1}=n/2$. $G'$ has $1$
fewer vertices and $n/2$ fewer edges than
$G\in\mathcal{G}_{n+1}(\mmax{n+1}+1)$, so
$G'\in\mathcal{G}_n(\mmax{n}+1)$. By the hypothesis,
$\theta(G')\leq\mmax{n-1}$.  We will now reconstruct $G$ from $G'$,
taking note of any changes to $\theta$.  As such, whenever we add an
edge, it is an edge which had previously been removed from $G$ in the
construction of $G'$. First, add $v$ and any 1 of its incident edges;
$v$ necessarily had at least 1 incident edge as $i(G)=0$. This
addition increases $\theta$ by exactly 1, as the newly added edge
comprises a maximal clique. There are now $n/2-1$ edges missing from
$G$. Add them all back, noting that Claim~\ref{clm:fourth} (or, more
accurately, the same reasoning used to prove it), guarantees
that each additional edges increases $\theta$ by at most 1. Thus,
$\theta(G)\leq\theta(G')+1+n/2-1\leq\mmax{n-1}+n/2=\mmax{n}$. 

Assume the Lemma holds for some odd $n\geq3$. Let
$G\in\mathcal{G}_{n+1}(\mmax{n+1}+1)$ such that $i(G)=0$. $G$ has
degree sum $D=2\mmax{n+1}+2$. As such, the average degree in $G$ is
$\frac{n+1}{2}+\frac{2}{n+1}$. Thus, the minimum degree of any vertex
in $G$ is at most $\frac{n+1}{2}$.  

{\bf Case 1:} If the minimum degree in $G$ is $\leq\frac{n-1}{2}$, then
let $v$ be a vertex of degree $\leq\frac{n-1}{2}$.
Remove $v$ and all
of its incident edges, along with any additional edges
necessary to remove $\frac{n-1}{2}$ total, to get graph $G'$.
Of course, $\theta(G)\leq\theta(G')+\frac{n-1}{2}$, as we can simply add
the $\frac{n-1}{2}$ edges to any cover of $G'$ to obtain a cover of $G$.
 Moreover, $G'\in\mathcal{G}_n(\mmax{n}+2)$. Since $\Theta_n(m)$ is
non-increasing for $m\geq\mmax{n}$,
$\theta(G')\leq\Theta_n(\mmax{n}+1)$. So, by the hypothesis,
$\theta(G')\leq\mmax{n-1}$. Thus,
$\theta(G)\leq\mmax{n-1}+\frac{n-1}{2}$; identically,
$\theta(G)\leq\mmax{n}$. 

{\bf Case 2:} If the minimum degree in $G$ is $\frac{n+1}{2}$, then
let $v$ be a vertex of degree $\frac{n+1}{2}$.

{\bf Subcase 1:} If $v$ is in a triangle, remove it an all of its
edges to get $G'$. $G'\in\mathcal{G}_n(\mmax{n}+1)$, so
$\theta(G')\leq\mmax{n-1}$ by the hypothesis. Since two of these edges
were in a triangle, their removal reduced the clique cover size by at
most 1. Removing the remaining
$\frac{n-3}{2}$ edges reduced $\theta$ by at most $\frac{n-3}{2}$.
Therefore,
$\theta(G)\leq\theta(G')+\frac{n-1}{2}\leq\mmax{n-1}+\frac{n-1}{2}=\mmax{n}$;
that is, $\theta(G)\leq\mmax{n}$.

{\bf Subcase 2:} If $v$ is not in a triangle, we need only find a 
vertex $v_0$, in a triangle, with degree $\frac{n+1}{2}$ for the 
previous subcase to apply. None of $v$'s
$\frac{n+1}{2}$ neighbors are adjacent to each other. Let $V$ be the
set of vertices adjacent to $v$, and let $V'$ be the set of vertices
in $G$ with are neither $v$ nor adjacent to $v$. Note that
$|V|=\frac{n+1}{2}$ and $|V'|=\frac{n-1}{2}$. The vertices in $V$ all
have degree of at least $\frac{n+1}{2}$, as this is the minimum degree
in $G$. As such, every vertex in $V$ is adjacent to at least
$\frac{n-1}{2}$ vertices other than $v$; none of the vertices in $V$
are adjacent to each other, and there are only $\frac{n-1}{2}$ vertices
(other than $v$) remaining. Thus, every vertex in $V$ is adjacent to
all $\frac{n-1}{2}$ vertices in $V'$. Let's count edges: there are
$\frac{n+1}{2}$ between $v$ and $V$ and another $\mmax{n}$ between $V$
and $V'$, and there are $\mmax{n+1}+1$ total edges; 1 edge is
unaccounted for. This edge must be between 2 vertices in $V'$, as it
cannot be in $V$, and every edge from $V$ to $V'$ is already counted.
So let $v_0$ be any vertex in $V$. $deg(v_0)=\frac{n+1}{2}$ and $v_0$
is necessarily in a triangle, so apply the previous subcase.

We have shown that
$\Theta_n(\mmax{n}+1)=\mmax{n-1}\implies\Theta_{n+1}(\mmax{n+1}+1)\leq\mmax{n}$.
It is easy to construct a graph $G\in\mathcal{G}_{n+1}(\mmax{n+1}+1)$
such that $\theta(G)=\mmax{n}$; simply add an extra edge to
$K_{\lfloor\frac{n}{2}\rfloor,\lceil\frac{n}{2}\rceil}$ 
between any two vertices in the larger partition
(if $n$ is odd) or in either partition (if $n$ is even).
Therefore, for all values $n$, $\Theta_n(\mmax{n}+1)=\mmax{n-1}$.
\qed
\end{proof}

Lemma~\ref{lem:sixteenth} implies that
$\Theta_n(m)=\mmax{n-1}$, if $\mmax{n}<m\leq{n\choose2}-\mmax{n-2}$, as $\Theta_n(m)$ is
non-increasing for $m\geq\mmax{n}$ and Theorem~\ref{thm:lovasz}
implies that $\Theta_n({n\choose2}-\mmax{t})=\mmax{t+1}$.
This, shows that the following three statements are equivalent. We conject
that they are true:

\begin{cnj}\label{cnj:seventeenth}

\begin{enumerate}[label=(\roman*)]
	\item\label{cnj17i} If $m\geq\mmax{n}$, 
		then $\Theta_{n+1}(m+n)=\Theta_n(m)$.
	\item\label{cnj17ii} If $m={n\choose2}-\mmax{t}+1>\mmax{n}$, then 
		$\Theta_n(m)=\mmax{t}$.
	\item\label{cnj17iii} If $m\geq\mmax{n}$, let $k={n\choose2}-m$, 
		and let $t$ be the largest natural number such that $\mmax{t}\leq k$. 
		Then $\Theta_n(m)=\mmax{t+1}$.

\end{enumerate}
\end{cnj}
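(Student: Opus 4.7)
The plan is to prove part (i); combined with Theorem~\ref{thm:lovasz}, Lemma~\ref{lem:sixteenth}, and the monotonicity of $\Theta_n(\cdot)$ on $m\geq\mmax{n}$, this will yield (iii) and hence (ii). Because the map $(n,m)\mapsto(n+1,m+n)$ preserves the missing-edge count $k={n\choose 2}-m$, part (i) is equivalent to saying that $\Theta_n(m)$ depends only on $k$ in this regime. Theorem~\ref{thm:lovasz} already fixes this value at $k=\mmax{s}$ to be $\mmax{s+1}$, and Lemma~\ref{lem:sixteenth} fixes it at $k=\mmax{n-1}-1$ on $n$ vertices (using the identity $\mmax{n-1}={n\choose 2}-\mmax{n}$). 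Granting (i), Lemma~\ref{lem:sixteenth} applied on $n'=s+2$ vertices transports the value $\mmax{s+1}$ to every vertex count at $k=\mmax{s+1}-1$, so both endpoints of the interval $k\in[\mmax{s},\mmax{s+1}-1]$ are fixed at $\mmax{s+1}$; monotonicity then forces $\Theta$ to equal $\mmax{s+1}$ throughout, which is (iii).

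\textbf{Forward direction of (i).} The inequality $\Theta_{n+1}(m+n)\geq\Theta_n(m)$ is routine. Take an extremal $G_0\in\mathcal{G}_n(m)$, which by Lemma~\ref{lem:fourteenth} has $i(G_0)=0$, and adjoin a new vertex $v$ adjacent to every vertex of $G_0$ to form $G\in\mathcal{G}_{n+1}(m+n)$. Since $G_0$ has no isolated points and $v$ is adjacent to everything, $S(G)=S(G_0)\cup\{v\}$ and $\widehat{G}=\widehat{G_0}$; applying Lemma~\ref{lem:ninth}(i) to both $G$ and $G_0$ then gives $\theta(G)=\theta(\widehat{G})=\theta(\widehat{G_0})=\theta(G_0)=\Theta_n(m)$.

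\textbf{Reverse direction --- the main obstacle.} The inequality $\Theta_{n+1}(m+n)\leq\Theta_n(m)$ follows from Lemma~\ref{lem:ninth}(iii) as soon as some extremal $G\in\mathcal{G}_{n+1}(m+n)$ satisfies $s(G)\geq 1$. By Lemma~\ref{lem:fourteenth} we already have $i(G)=0$. A star is automatic whenever $k<(n+1)/2$: otherwise every vertex would have degree $\leq n-1$, forcing $2k=\sum_v(n-\deg v)\geq n+1$. In the remaining regime $k\geq(n+1)/2$, where $G$ may be genuinely star-free, the plan is a Zykov-style edge-swap. Starting from a star-free extremal $G$, pick a maximum-degree vertex $u$ with some non-neighbor $w$, delete a carefully chosen edge not incident to $u$, and add $uw$, producing $G^{*}\in\mathcal{G}_{n+1}(m+n)$ with $\theta(G^{*})\geq\theta(G)$ and $\deg_{G^{*}}(u)=\deg_G(u)+1$. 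Iterating raises $u$ to a star and yields the desired extremal graph with $s\geq 1$. The crux is selecting the deleted edge so that $\theta$ provably does not drop; controlling the effect of each swap on the clique-cover structure is what stands between Lemma~\ref{lem:sixteenth} and the full conjecture, and Lemmas~\ref{lem:eighteenth} and~\ref{lem:nineteenth} (alluded to but not shown in this excerpt) presumably execute such swaps in restricted sub-ranges of $k$.
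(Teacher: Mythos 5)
The statement you are attacking is, in the paper, an open conjecture: the authors do not prove it, but only (a) record that (i), (ii), (iii) are equivalent, and (b) verify special cases --- Lemma~\ref{lem:sixteenth} gives (ii) for $t=n-1$, Lemma~\ref{lem:nineteenth} gives it for $t=n-2$, and Lemma~\ref{lem:eighteenth} gives (i) exactly when $k={n\choose2}-m\le n/2$. Your proposal lands in essentially the same place. The parts you actually prove are sound and coincide with the paper's partial results: the transport argument ((i) together with Lov\'asz at $k=\mmax{s}$, Lemma~\ref{lem:sixteenth} at $k=\mmax{s+1}-1$, and monotonicity forces the plateau, hence (iii) and then (ii)) is the same equivalence the paper states after Lemma~\ref{lem:sixteenth}; the forward inequality $\Theta_{n+1}(m+n)\ge\Theta_n(m)$ by adjoining a universal vertex is fine (modulo the harmless edge case $m={n\choose2}$, where $C(G)$ is a clique and $\hat{G}$ degenerates to a single vertex); and your observation that an extremal $G$ on $n+1$ vertices automatically has a star when $k<(n+1)/2$ --- since $i(G)=0$ by Lemma~\ref{lem:fourteenth} and $2k=\sum_v\bigl(n-\deg v\bigr)$ --- is precisely the pigeonhole argument by which the paper proves Lemma~\ref{lem:eighteenth}.

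The gap is the reverse inequality when $k\ge(n+1)/2$, and it is a genuine one rather than a technicality: there you only sketch a Zykov-style swap and yourself flag its crux --- choosing the deleted edge so that $\theta$ provably does not drop --- as unresolved. That is exactly the open content of the conjecture. Deleting an edge can lower $\theta$, and adding the edge $uw$ can lower it further (it may merge two cliques of an optimal cover), so there is no a priori reason a swap preserves extremality; any argument must control how a minimum cover changes under the swap, which is what neither you nor the paper supplies in general. Note also that the paper's only advance beyond the star regime, Lemma~\ref{lem:nineteenth} (the point $m=\mmax{n+1}$, i.e.\ $k=\mmax{n-2}-1$), is not obtained by any swap: it is a lengthy induction on the minimum degree with explicit triangle/clique constructions, so your plan does not subsume it. In sum, your proposal is a correct reduction of the conjecture to the star-free regime plus a plausible but unexecuted plan there; as a proof of the statement it is incomplete, much as the paper's own treatment is.
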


Lemma~\ref{lem:sixteenth} shows that \ref{cnj17ii} is true when $t=n-1$.
We will show in Lemma~\ref{lem:nineteenth} that \ref{cnj17ii} also holds 
when $t=n-2$,
and in Lemma~\ref{lem:eighteenth} that \ref{cnj17i} is true when
${n\choose2}-m\leq n/2$.

\begin{lem}\label{lem:eighteenth}
Let $k={n\choose2}-m$. If $k\leq n/2$, then $\Theta_{n+1}(m+n)=\Theta_n(m)$.
\end{lem}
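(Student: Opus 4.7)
The plan is to prove both inequalities separately, exploiting the fact that the substitution $n \mapsto n+1$, $m \mapsto m+n$ leaves the number of missing edges invariant: $\binom{n+1}{2} - (m+n) = \binom{n}{2} - m = k$. Thus both sides count maximum $\theta$-values over graphs that are $k$ edges shy of complete, and the hypothesis $k \leq n/2$ places us in a regime very close to the complete graph, which is where Lemma~\ref{lem:ninth} becomes most powerful.

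For the lower bound $\Theta_{n+1}(m+n) \geq \Theta_n(m)$, I would take an extremal $G_0 \in \mathcal{G}_n(m)$ with $\theta(G_0) = \Theta_n(m)$ and build $G_1$ by adjoining a new vertex $v$ adjacent to every vertex of $G_0$. Then $G_1 \in \mathcal{G}_{n+1}(m+n)$, and $v$ is in $S(G_1)$ tautologically, since every originally isolated vertex of $G_0$ is now adjacent to $v$ and therefore lies in $C(G_1)$. The strengthening of Lemma~\ref{lem:ninth}(i) to arbitrary subsets of $S(G)$ (as noted in the text immediately preceding that lemma) then yields $\theta(G_1) = \theta(G_1 \setminus v) = \theta(G_0) = \Theta_n(m)$. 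Note that this direction does not require the hypothesis $k \leq n/2$.

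For the upper bound $\Theta_{n+1}(m+n) \leq \Theta_n(m)$, I would let $G \in \mathcal{G}_{n+1}(m+n)$ be arbitrary and show that $G$ must contain at least one star; Lemma~\ref{lem:ninth}(iii) then gives $\theta(G) \leq \Theta_n(m)$ immediately. First, any isolated vertex of $G$ would force $n$ missing incident edges, so $k \geq n$, contradicting $k \leq n/2$; hence $i(G) = 0$ and $C(G) = V(G)$. With no isolated vertices, a vertex fails to be a star precisely when it is an endpoint of some missing edge, so the set of non-stars has size at most $2k$, leaving at least $(n+1) - 2k \geq 1$ stars, as required.

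The main obstacle, and the only place the hypothesis $k \leq n/2$ is actually invoked, is this star-existence count in the upper bound. The bound $(n+1) - 2k \geq 1$ is exactly what forces the threshold $k \leq n/2$: once $k$ exceeds $n/2$, a graph can distribute its missing edges (for instance, as a sufficiently large matching of non-edges) so that every vertex is an endpoint of some missing edge, making $s(G) = 0$ and blocking the clean reduction via Lemma~\ref{lem:ninth}(iii). This is precisely why the present lemma establishes Conjecture~\ref{cnj:seventeenth}\ref{cnj17i} only in the restricted range $k \leq n/2$, rather than in full generality.
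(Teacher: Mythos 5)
Your proof is correct and takes essentially the same route as the paper: the upper bound by counting that the $k\le n/2$ missing edges can touch at most $2k<n+1$ vertices, so some vertex is a star and Lemma~\ref{lem:ninth} removes it without changing $\theta$, and the lower bound by adjoining a dominating vertex to an extremal graph in $\mathcal{G}_n(m)$. Your additional observation that $i(G)=0$ (and hence that non-stars are exactly endpoints of missing edges) just makes explicit a step the paper leaves implicit.
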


\begin{proof}
Let $G\in\mathcal{G}_{n+1}(m+n)$. ${n+1\choose2}-(m+n)={n\choose2}-m=k$, so $G$
is missing $k$ edges. $k\leq n/2$, so $k<\frac{n+1}{2}$. Every edge only has 2
endpoints, so there is at least 1 vertex $v$ which is not adjacent to any of
the missing edges. $v$ must be in $S(G)$, as it is adjacent to every other
vertex in $G$. Therefore, $\theta(G-\{v\})=\theta(G)$. $G-\{v\}$ has $n$
vertices and $m$ edges, so $\theta(G)\leq\Theta_n(m)$. 

We can easily construct a
graph $G\in\mathcal{G}_{n+1}(m+n)$ such that $\theta(G)=\Theta_n(m)$; simply add
a star to a graph $G_0\in\mathcal{G}_n(m)$ such that $\theta(G_0)=\Theta_n(m)$.
\qed
\end{proof}

\begin{lem}\label{lem:nineteenth}
$\Theta_n({n\choose2}-\mmax{n-2}+1)\leq\mmax{n-2}$
\end{lem}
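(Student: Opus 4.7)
The plan is to let $G\in\mathcal{G}_n(m)$ with $m=\binom{n}{2}-\mmax{n-2}+1$ attain $\theta(G)=\Theta_n(m)$. The preparatory identity $\binom{n-1}{2}=\mmax{n-1}+\mmax{n-2}$, verified by splitting on the parity of $n-1$, rearranges to $m=\mmax{n-1}+n$, and since $m\geq\mmax{n}$, Lemma~\ref{lem:fourteenth} permits us to assume $i(G)=0$. The argument then splits on whether $G$ has a star.

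If $s(G)\geq 1$, pick a star $v$; Lemma~\ref{lem:ninth}(i) gives $\theta(G)=\theta(G-v)$, and by the identity $G-v\in\mathcal{G}_{n-1}(\mmax{n-1}+1)$, so Lemma~\ref{lem:sixteenth} yields $\theta(G-v)\leq\mmax{n-2}$. A double-count ($2(\mmax{n-2}-1)<n$ whenever $n\leq 5$) shows that every qualifying $G$ on $n\leq 5$ vertices must contain a star, so this case also disposes of the base cases.

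If $s(G)=0$ (so necessarily $n\geq 6$), every vertex has at least one non-neighbor. The plan is to emulate the vertex-removal argument of Lemma~\ref{lem:sixteenth}: pick a vertex $v$ of minimum $G$-degree $d$ (an averaging argument caps $d$ at roughly $n/2$) and remove $v$ together with $n-1-d$ additional edges---chosen from within triangles of $G$ whenever feasible, so that Claim~\ref{clm:thirteenth} applies---to produce $G'\in\mathcal{G}_{n-1}(\mmax{n-1}+1)$. Lemma~\ref{lem:sixteenth} then yields $\theta(G')\leq\mmax{n-2}$. One reconstructs $G$ from $G'$ by first re-adding $v$ with one of its incident edges (which contributes exactly $+1$ to $\theta$) and then re-inserting the remaining $n-2$ removed edges, invoking Claim~\ref{clm:fourth} at each step and Claim~\ref{clm:thirteenth} whenever an insertion completes a triangle. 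The main obstacle is this final bookkeeping: naively the $n-2$ re-insertions push $\theta$ well above $\mmax{n-2}$, so one must exploit the triangle-richness forced by $m>\mmax{n-1}$ together with the absence of stars to arrange enough triangle-cancellations. I expect the detailed argument to mirror Lemma~\ref{lem:sixteenth}'s subcases, splitting on the parity of $n$ and on whether $v$ itself lies on a triangle, with a preliminary pivot to a nearby low-degree vertex in the negative subcase.
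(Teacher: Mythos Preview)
Your star case is correct and in fact cleaner than anything the paper does: once $i(G)=0$ and $s(G)\ge 1$, Lemma~\ref{lem:ninth}(iii) together with your identity $m=\mmax{n-1}+n$ and Lemma~\ref{lem:sixteenth} immediately give $\theta(G)\le\Theta_{n-1}(\mmax{n-1}+1)=\mmax{n-2}$. The paper never isolates this case; it proceeds directly by minimum-degree analysis for all $G$.

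The no-star case, however, has a genuine gap, and it is not just the missing bookkeeping you flag. Your plan is to delete $v$ together with $n-1-d$ extra edges so that $G'\in\mathcal{G}_{n-1}(\mmax{n-1}+1)$, then apply Lemma~\ref{lem:sixteenth} to get $\theta(G')\le\mmax{n-2}$, and finally rebuild $G$. But rebuilding means re-inserting $n-1$ edges, and the very first step (re-attaching $v$ with a single edge) already costs $+1$. No amount of triangle pairing via Claim~\ref{clm:thirteenth} can bring the total increment to zero: at best, pairing all $n-1$ reinsertions into triangles still contributes roughly $(n-1)/2$. Since your only control on $\theta(G')$ is the upper bound $\mmax{n-2}$, the resulting bound on $\theta(G)$ is $\mmax{n-2}+\text{(something positive)}$, which overshoots. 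The target you chose for $G'$ leaves zero slack.

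The paper's remedy is to argue by induction on $n$, invoking Lemma~\ref{lem:nineteenth} itself for $n-1$ rather than Lemma~\ref{lem:sixteenth}. Removing a minimum-degree vertex of degree $d\approx n/2$ lands $G'$ in $\mathcal{G}_{n-1}(\mmax{n})$, and the induction hypothesis gives $\theta(G')\le\mmax{n-3}$, not $\mmax{n-2}$. The gap $\mmax{n-2}-\mmax{n-3}\approx (n-2)/2$ is now exactly enough to absorb the $d$ reinserted edges once a single triangle saving is secured. The parity and triangle subcases you anticipate are indeed present, but they serve to guarantee that one triangle saving (or, in the regular subcase, a more elaborate triangle-packing around a removed triangle), not to drive the increment to zero. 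If you want to salvage your outline, switch the target for $G'$ from $\mmax{n-1}+1$ edges to $\mmax{n}$ edges and set up the argument as an induction on $n$.
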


\ifreplace
Lemmas~\ref{lem:sixteenth} ,~\ref{lem:eighteenth}, and~\ref{lem:nineteenth} 
are our most notable results, as they are proven improvements to
Theorem~\ref{thm:lovasz}. We regret that we must the omit proof
of Lemma~\ref{lem:nineteenth}, as it is nearly four pages long.
\fi

\ifpart
\begin{proof}
It should first be noted that ${n\choose2}-\mmax{n-2}+1 = \mmax{n+1}$, so this
Lemma can be reformulated as $\Theta_n(\mmax{n+1})\leq\mmax{n-2}$.
We have shown through exhaustive search of all graphs on $\leq8$ vertices
that this is true for all $n\leq8$.

{\bf Case 1:} Even $n$

Assume $n$ is even and $n\ge10$. Let $G\in\mathcal{G}_n(m)$, where
$m={n\choose2}-\mmax{n-2}+1$. Moreover, assume that the Lemma is true for all
$n_0<n$.

Since $n$ is even, ${n\choose2}-\mmax{n-2}+1 = \frac{n^2+2n}{4}$. Thus, the
degree sum of $G$ is $\frac{n^2+2n}{2}$, and as such the average degree in $G$
is $\frac{n}{2}+1$. Let $d=\min{\{\text{deg}(v)|v\in V\}}$. Based on the average
degree, $d\leq\frac{n}{2}+1$.

{\bf Subcase 1:} $d\leq\frac{n}{2}-1$

If $d\leq\frac{n}{2}-1$, let $v$ be a vertex with degree $\leq\frac{n}{2}-1$,
and let $G'$ be $G$ without $v$ or any of its edges. $G'$ has $n-1$ vertices
and $m-d$ edges. $m-d\geq m-\frac{n}{2}+1=\mmax{n}+1\geq\mmax{n}$.
Thus, $\theta(G')\leq\Theta_{n-1}(\mmax{n})\leq\mmax{n-3}$. The remaining edges
(those adjacent to $v$) can be covered by at most $\frac{n}{2}-1$ cliques. Thus,
$\theta(G)\leq\mmax{n-3}+\frac{n}{2}-1=\mmax{n-2}$.

{\bf Subcase 2:} $d=\frac{n}{2}$

If $d=\frac{n}{2}$. Let $v$ be a vertex of degree $d$. Let $\mathcal{N}_v$ denote $v$'s neighborhood,
and let $\mathcal{N}_v^c$ denote the set of vertices in $G$ which are neither
$v$ nor in $\mathcal{N}_v$.

Assume $v$ is not in a triangle. $|\mathcal{N}_v|=\frac{n}{2}$, so
$|\mathcal{N}_v^c|=\frac{n}{2}-1$ (as there are $n-1$ vertices other than $v$).
Since $v$ is not in a triangle, $\mathcal{N}_v$ is pairwise disjoint. Every
vertex must be adjacent to at least $\frac{n}{2}$ others, and there are only
$\frac{n}{2}$ vertices not in $\mathcal{N}_v$, so every vertex in
$\mathcal{N}_v$ must be adjacent to all of them. Moreover, $\mathcal{N}_v$ is
pairwise disjoint, no more edges can be added to any vertex in it, so the degree
of every vertex in $\mathcal{N}_v$ is $\frac{n}{2}$ exactly. Let's count edges:
there are $\frac{n}{2}$ between $v$ and $\mathcal{N}_v$, and another
$\frac{n}{2}(\frac{n}{2}-1)$ between $\mathcal{N}_v$ and $\mathcal{N}_v^c$;
that's $\mmax{n}$ total edges; $\frac{n}{2}$ edges are missing. These edges must
be in $\mathcal{N}_v^c$, as $v$ is not in any triangles. Since every vertex in
$\mathcal{N}_v$ is connected to every vertex in $\mathcal{N}_v^c$, a single edge
in $\mathcal{N}_v^c$ is enough to imply that there is an element of
$\mathcal{N}_v$ which is in a triangle. Since every element of $\mathcal{N}_v$
has degree $\frac{n}{2}$, there must be a vertex of degree $\frac{n}{2}$ which
is in a triangle.

So it is safe to assume that there is a vertex of degree $\frac{n}{2}$ which is in
a triangle, as such a vertex necessarily exists. Let $v$ be such a vertex. Let
$G'$ be $G$ without $v$ or its edges. $G'$ has $n-1$ vertices and $\mmax{n}$
edges, so $\theta(G')\leq\Theta_{n-1}(\mmax{n})$. This, by our hypothesis,
implies that $\theta(G')\leq\mmax{n-3}$. In order to cover the rest of $G$, we
need only cover the edges adjacent to $v$. There are $\frac{n}{2}$ of them, 2 of
which can be covered by 1 triangle. Thus,
$\theta(G)\leq\mmax{n-3}+\frac{n}{2}-1=\mmax{n-2}$.

{\bf Subcase 3:} $d=\frac{n}{2}+1$

Finally, if $d=\frac{n}{2}+1$, then every vertex has degree of exactly $d$.
Since $m>\mmax{n}$, there is a triangle $T=(u,v,w)$ in $G$. Let $G'$ be $G$
without $u$, $v$, $w$ or their edges. $G'$ has $n-3$ vertices and $\mmax{n-2}-1$
edges, so by Lemma~\ref{lem:sixteenth}, $\theta(G')\leq\mmax{n-4}$. Each vertex
in $T$ has $\frac{n}{2}-1$ neighbors in $G'$ (as its other 2 neighbors are in
$T$. This gives us a lot of information. First, since there are $n-3$ vertices
in $G'$ and $2(\frac{n}{2}-1)=n-2$, each pair of vertices in $T$ has at least 1
common neighbor in $G'$. This means that all three edges in $G'$ can be covered
by triangles containing 2 elements of $T$ and 1 element of $G'$. Moreover,
$3(\frac{n}{2}-1)=\frac{n}{2}+n-3$; if we were to list all of the neighbors (in
$G'$) of the vertices in $T$, there would be at least $\frac{n}{2}$ repeats (if
a vertex is listed twice, it's been repeated once, and if it's been listed three
times it's been repeated twice\dots). Let $x$ be a repeated vertex. If $x$ was
only repeated once, it is adjacent to two elements of $T$. As such, two edges
connecting $T$ to $G'$ can be covered with 1 triangle. As such, this repeat
allows us to cover at least 1 ``extra'' edge with 1 clique. Alternately, if a
vertex is repeated twice, it is adjacent to all 3 elements of $T$, so 3 edges
between $T$ and $G'$ can be covered with 1 clique, so we've covered 2 ``extra''
edges. Moreover, these triangles and 4-cliques do not repeat edges (other than
those in $T$, as they each have a single unique vertex in $G'$. So these
$\frac{n}{2}$ repetitions equate to at least $\frac{n}{2}$ fewer cliques needed
to cover the edges between $T$ and $G'$, while simultaneously covering all 3
edges in $T$. As such, we can cover all edges in $T$ and all edges between $T$
and $G'$ with at most $3(\frac{n}{2}-1)-\frac{n}{2}=n-3$. 

So $\theta(G)\leq\theta(G')+n-3\leq\mmax{n-4}+n-3=\mmax{n-2}$.

{\bf Case 2:} Odd $n$

Assume $n$ is odd and $\geq9$. Let $G\in\mathcal{G}_n(m)$, where $m=\mmax{n+1}$.
Moreover, like the previous case, assume that the Lemma is true for all
$n_0<n$. 

$m=\mmax{n+1}=\frac{n^2+2n+1}{4}$, so the degree sum is $\frac{n^2+2n+1}{2}$. As
such, the average degree is $\frac{n+1}{2}+\frac{n+1}{2n}$, so
$d\leq\frac{n+1}{2}$ (as $\frac{n+1}{2}$ is obviously less than 1).

{\bf Subcase 1:} $d\leq\frac{n-3}{2}$

If $d\leq\frac{n-3}{2}$, let $v$ be a vertex with degree $d$. Let $G'$ be $G$
without $v$ or its edges. $G'$ has $n-1$ vertices and at least $\mmax{n}+2$
edges. Thus, $\theta(G')\leq\mmax{n-3}$. We can cover the edges adjacent to $v$
with at most $\frac{n-3}{2}$ cliques, so
$\theta(G)\leq\mmax{n-3}+\frac{n-3}{2}=\frac{n^2-4n+3}{4}=\mmax{n-2}$. 

{\bf Subcase 2:} $d=\frac{n-1}{2}$

Let $v$ be a vertex of degree $d$. There are $\frac{n-1}{2}$ vertices in
$\mathcal{N}_v$.

Assume $v$ is not in a triangle. Every vertex in $\mathcal{N}_v$ is adjacent to
at least $\geq\frac{n-3}{2}$ vertices and $\leq\frac{n-1}{2}$ in 
$\mathcal{N}_v^c$. Thus, there are at most $(\frac{n-1}{2})^2=\mmax{n-1}$ edges
between $\mathcal{N}_v$ and $\mathcal{N}_v^c$. We have accounted for
$\mmax{n-1}+\frac{n-1}{2}=\mmax{n}$ edges; there are at least $\frac{n-1}{2}$
edges remaining, all of which must be in $\mathcal{N}_v^c$.

If there is a vertex$v_0$ in $\mathcal{N}_v$ with degree $\frac{n-1}{2}$, this 
vertex is adjacent to $\frac{n-3}{2}$ of the $\frac{n-1}{2}$ vertices in
$\mathcal{N}_v^c$; let the vertex in $\mathcal{N}_v^c$ to which it is
nonadjacent be $w$. Given that there are at least $\frac{n-1}{2}$ edges in 
$\mathcal{N}_v^c$, which is strictly more than $\frac{n-3}{2}$, at least one of
the edges in $\mathcal{N}_v^c$ does not include $w$, and therefore forms a
triangle with its two adjacent vertices ($u_1$ and $u_2$) and $v_0$. We have
found $v_0$, a vertex with degree $\frac{n-1}{2}$ which is also in a triangle.
Let $G'$ be $G$ without $v_0$ or its edges. $G'$ has $n-1$ vertices and least
$\mmax{n}+1$ edges, so $\theta(G')\leq\Theta_{n-1}(\mmax{n}+1)\leq\mmax{n-3}$.
The remaining edges in $G$ are those adjacent to $v_0$; there are
$\frac{n-1}{2}$ of them, and 2 can be covered by a triangle, so
$\theta(G)\leq\mmax{n-3}+\frac{n-3}{2}=\mmax{n-2}$. Also note that this works
with any vertex of degree $\frac{n-1}{2}$ which is in a triangle, so the case in
which $v$ is in a triangle has been proven in the process.

If there are no vertices of degree $\frac{n-1}{2}$ in $\mathcal{N}_v$, then
every vertex in $\mathcal{N}_v$ has degree $\frac{n+1}{2}$ (as there are only
$\frac{n+1}{2}$ vertices not in $\mathcal{N}_v$). Therefore, every vertex in
$\mathcal{N}_v$ is adjacent to every vertex in $\mathcal{N}_v^c$. So there are
$(\frac{n-1}{2})^2=\mmax{n-1}$ edges between $\mathcal{N}_v$ and
$\mathcal{N}_v^c$, and $\frac{n-1}{2}$ between $v$ and $\mathcal{N}_v$ for a
total of $\mmax{n}$; there are $\frac{n}{2}$ edges in $\mathcal{N}_v^c$. 

Let $E$ be the set of edges in $\mathcal{N}_v^c$. Assume, for contradiction,
that none of the elements of $E$ are pairwise disjoint. That is, assume every 
pair of edges in $E$ has a common vertex. Then $\mathcal{N}_v^c$ contains no 
triangles, as any edge in $E$ (not in the triangle) could contain at most 1 
of the triangles vertices and would therefore be disjoint with at least 1 of the
triangles edges. 

Choose any $2$ edges in $E$. These edges share a
vertex, $a$. Choose a third edge in $E$. If it is not adjacent to $a$, then it
must form a triangle with the previous 2 edges in order to share a vertex with
each of them. Therefore, the third edge must be adjacent to $a$. As must the
fourth\dots until there are $\frac{n-3}{2}$ edges connecting $a$ to every
other vertex in $\mathcal{N}_v^c$. There must be another edge in
$\mathcal{N}_v^c$, and it cannot be adjacent to $a$; label this edge $(c,d)$. 
Since $n\geq9$, there are $\geq4$ vertices in $\mathcal{N}_v^c$, so there is at
least $1$ more vertex, $b$. Since $a$ is adjacent to everything in
$\mathcal{N}_v$, $(a,b)$ is also an edge. We have found a pair of disjoint edges
in $E$, so obviously our assumption that it is not pairwise disjoint was false;
there are two disjoint edges $(a,b)$ and $(c,d)$ in $\mathcal{N}_v^c$.

Let $v_0$ be a vertex in $\mathcal{N}_v$. Let $G'$ be $G$ without $v_0$ or its
edges. $G'$ has $n-1$ vertices and $\mmax{n+1}-\frac{n+1}{2}=\mmax{n}$. Thus,
$\theta(G)\leq\Theta_{n-1}(\mmax{n})\leq\mmax{n-3}$. We can cover the remaining
$\frac{n+1}{2}$ edges adjacent to $v_0$ with at most $\frac{n-3}{2}$ cliques,
because $4$ of these edges can be covered by the triangles $(a,b,v)$ and
$(c,d,v)$. Thus, $\theta(G)\leq\mmax{n-3}+\frac{n-3}{2}=\mmax{n-2}$. Note that
this works with any vertex of degree $\frac{n+1}{2}$ which is in two otherwise
disjoint triangles.

{\bf Subcase 3} $d=\frac{n+1}{2}$

Let $V$ be the set of vertices in $G$ with degree $\frac{n+1}{2}$, and let $V'$
be the vertices with degree $\geq\frac{n+3}{2}$. Assume, for contradiction, that
$|V|\leq\frac{n-3}{2}$. Then the minimum degree sum of $G$ is
$(\frac{n-3}{2})(\frac{n+1}{2})+(\frac{n+3}{2})(\frac{n+3}{2})=\frac{n^2+2n+3}{2}$.
Recall that the degree sum of $G$ is $\frac{n^2+2n+1}{2}$, so we've found our
contradiction; we now know that $|V|\geq\frac{n-1}{2}$.

Assume, for contradiction, that $|V'|\leq1$. Then the maximum degree sum of $G$ is
$(n-1)(\frac{n+1}{2})+1(\frac{n+3}{2})=\frac{n^2+n+2}{2}<\frac{n^2+2n+1}{2}$
(recall $n\geq4$). This contradicts the known degree sum of $G$, so it must be 
false; $|V'|\geq2$. 

Once more, assume for contradiction that there is no element of $V$ which is
adjacent to two distinct elements of $V'$. Every vertex in $V$ has
$\frac{n+1}{2}$ neighbors. At most one of these neighbors is in $V'$, so at least
$\frac{n-1}{2}$ neighbors must be in $V$. Therefore $|V|\geq\frac{n+1}{2}$.
As such, each there are at most $\frac{n-1}{2}$ vertices in $V'$, so each vertex
in $V'$ must be adjacent to at least four vertices in $V$. Moreover, since no
vertex in $V$ can be adjacent to two in $V'$, we now know that $|V|\geq4|V'|$;
in other words, $|V'|\leq\lfloor\frac{n}{5}\rfloor$. 

Choose two vertices in $V'$. They must each be adjacent to at least 
$\frac{n+1}{2}$ vertices (other than each other). There are only $n-2$ other 
vertices, so they have at least 3 neighbors in common. These common neighbors
are adjacent to two vertices in $V'$, so they cannot be in $V$ and must be in
$V'$; $|V'|\geq5$. Choose five vertices in $V'$. They are each adjacent to at
least $\frac{n-3}{2}$ vertices (again, other than each other), of which there are
$n-5$. $5(\frac{n-3}{2})-(n-5)=\frac{3n-5}{2}$, so there are at least
$\frac{3n-5}{2}$ repeats. A vertex can be repeated at most four times (five
vertices in $V'$, first isn't a repeat), so there are at least $\frac{3n-5}{8}$
repeated vertices; that is, there are at least $\frac{3n-5}{8}$ vertices (other
than the give we already had) in adjacent to two elements of $V'$, and therefore
in $V'$. Thus, $|V'|\geq\frac{3n-5}{8}+5$.

We know that $\frac{3n-5}{8}+5\leq|V'|\leq\frac{n}{5}$, so if
$\frac{3n-5}{8}+5>\frac{n}{5}$, we have a contradiction. This turns out to be
true for any value of $n$ which is $\geq-25$, which $n$ obviously is, so at long
last we've arrived at a contradiction. Therefore, there must be an element $v$
of $V$ which is adjacent to two elements $a$ and $b$ of $V'$. 

Assume $a$ and $b$ are adjacent. $v$ is adjacent to $\frac{n-3}{2}$ vertices
other than $a$ and $b$. $a$ is adjacent to at least $\frac{n-1}{2}$ vertices
other than $b$ and $v$. There are $n-3$ total other vertices, so $a$ and $v$
have at least one neighbor $a_1\neq b$ in common. Similarly, $b$ and $v$ have a
neighbor $b_1\neq a$ in common. If $b_1=a_1$, then $a_1$ is adjacent to $b$, so 
$(v,a,b,a_1)$ is a clique. This clique can cover three edges between $v$ and
$\mathcal{N}_v$. If $a_1\neq b_1$, then the two triangles $(v,a,a_1)$ and
$(v,b,b_1)$ cover four edges between $v$ and $\mathcal{N}_v$. In either case,
every edge between $v$ and its neighborhood can be covered with at most
$\frac{n-3}{2}$ cliques. 

If $a$ and $b$ are not adjacent, then $a$ is adjacent to at least
$\frac{n+1}{2}$ vertices other than $v$. $v$ is still adjacent to
$\frac{n-3}{2}$ vertices other than $a$ and $b$, so $a$ and $v$ have at least
two neighbors $a_1,a_2$ in common (and since $a$ and $b$ are not adjacent,
these common neighbors cannot be $b$). Similarly, $b$'s and $v$'s neighborhoods
have at least $b_1,b_2$ in common. At least one of $\{b_1,b_2\}$ is not $a_1$,
so we can assume without loss of generality that $a_1\neq b_1$. The
two triangles $(v,a,a_1)$ and $(v,b,b_1)$ cover four edges between $v$ and
$\mathcal{N}_v$, so the edges adjacent to $v$ can be covered with at most
$\frac{n-3}{2}$ cliques.

So, whether or not $a$ and $b$ are adjacent, let $G'$ be $G$ without $v$ or its
edges. $G'$ has $n-1$ vertices and $\mmax{n}$ edges, so
$\theta(G)\leq\Theta_{n-1}(\mmax{n})\leq\mmax{n-3}$. We've shown that the
remaining edges can be covered with at most $\frac{n-3}{2}$ cliques, so
$\theta(G)\leq\mmax{n-3}+\frac{n-3}{2}=\mmax{n-2}$.
\qed
\end{proof}
\fi

We haven't proven that Conjecture~\ref{cnj:seventeenth} is true for all $m$.
The results of Lemma~\ref{lem:sixteenth}
and~\ref{lem:nineteenth} and the non-decreasing nature of $\Theta_n(m)$ 
past $\mmax{n}$ show that Conjecture~\ref{cnj:seventeenth}\ref{cnj17iii} 
is true for all $m\in[\mmax{n},{n\choose2}-\mmax{n-3}]$. 
Similarly, Lemma~\ref{lem:eighteenth} shows that 
Conjecture~\ref{cnj:seventeenth}\ref{cnj17i} is true if 
$m\geq{n\choose2}-n/2$.


\section{Summary of results}

Putting together the results in Section~\ref{sec:pre-maximum}, the
behavior of $\Theta_n(m)$ for $m\le\mmax{n}$, 
and Section~\ref{sec:post-maximum}, the
behavior of $\Theta_n(m)$ for $m\ge\mmax{n}$, we can now state the
conclusion of the paper succinctly as Theorem~\ref{thm:final}.

\begin{thm}\label{thm:final}
Let $\Theta_n(m)$, for $1\le m\le{n\choose 2}$, be the size of the largest
minimal clique-cover for any graph on $n$ vertices and $m$ edges. Then, for
$m\le\mmax{n}$, we have:
\begin{subnumcases}{\Theta_n(m)=}
\Theta_{n-1}(m)+1 & \text{for $m\le\mmax{n-1}$} \label{thm:con1}\\
m & \text{for $\mmax{n-1}+1\le m\le\mmax{n}$} \label{thm:con2}\\
\mmax{n-1} & \text{for $\mmax{n}+1\le m\le{n\choose2}-\mmax{n-2}$} 
\label{thm:con3}\\
\mmax{n-2} & \text{for ${n\choose2}-\mmax{n-2}+1\le m\le{n\choose2}-\mmax{n-3}$} 
\label{thm:con4}\\
k+t & \text{for $m\geq{n\choose2}-\mmax{n-3}+1$}
\label{thm:con5}
\end{subnumcases}
Where $k$ and $t$ are defined as in Theorem~\ref{thm:lovasz}.

\noindent The recursive definition can be easily unwound, and the values of
$\Theta_n(m)$ computed explicitly with Algorithm~\ref{alg:ccb} which
runs in linear time in the length of the binary encoding of $n$.
\end{thm}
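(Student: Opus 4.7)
The proof is a case-by-case assembly of the lemmas and claims developed in Sections~\ref{sec:pre-maximum} and~\ref{sec:post-maximum}. The overall strategy for each of the five clauses is the same: pin down $\Theta_n(m)$ at one or two endpoints of the corresponding interval of $m$, then use the monotonicity of $\Theta_n$ (non-decreasing for $m \le \mmax{n}$, non-increasing for $m \ge \mmax{n}$) together with the per-step bound from Claim~\ref{clm:fourth} to fill in the interior.

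Clause~(\ref{thm:con1}) is just Lemma~\ref{lem:eighth}, reindexed from $n+1$ to $n$. For Clause~(\ref{thm:con2}), the left endpoint is obtained by combining three results: Lemma~\ref{lem:sixth} applied to $n-1$ gives $\Theta_{n-1}(\mmax{n-1}) = \mmax{n-1}$ (because $\mmax{n-1}$ is always of the form $p^2$ or $p(p+1)$); Lemma~\ref{lem:eighth} lifts this to $\Theta_n(\mmax{n-1}) = \mmax{n-1}+1$; and Lemma~\ref{lem:seventh} transports the value across to $\Theta_n(\mmax{n-1}+1) = \mmax{n-1}+1$. The right endpoint $\Theta_n(\mmax{n}) = \mmax{n}$ is the Mantel fixed point discussed in the introduction. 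The interval from $\mmax{n-1}+1$ to $\mmax{n}$ has length $\mmax{n} - \mmax{n-1} - 1 = \lfloor n/2\rfloor - 1$, and the total required increase in $\Theta_n$ is also $\lfloor n/2\rfloor - 1$; since Claim~\ref{clm:fourth} caps every increment at $+1$, each step must contribute exactly $+1$, forcing $\Theta_n(m) = m$ across the interval.

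Clauses~(\ref{thm:con3}) and~(\ref{thm:con4}) use the analogous ``right-hand'' argument. For~(\ref{thm:con3}), Lemma~\ref{lem:sixteenth} gives $\Theta_n(\mmax{n}+1) = \mmax{n-1}$ at the left endpoint, while Theorem~\ref{thm:lovasz} in its reformulated guise (with $2t$ or $2t-1$ equal to $n-1$ depending on the parity of $n$) gives $\Theta_n\bigl(\binom{n}{2}-\mmax{n-2}\bigr) = \mmax{n-1}$ at the right endpoint; since $\Theta_n$ is non-increasing beyond $\mmax{n}$, it is forced to be constantly $\mmax{n-1}$ throughout. Clause~(\ref{thm:con4}) follows the same template: Lemma~\ref{lem:nineteenth} supplies the upper bound $\mmax{n-2}$ at the left endpoint, the reformulated Theorem~\ref{thm:lovasz} supplies the matching value at the right endpoint, and monotonicity pins $\Theta_n$ at $\mmax{n-2}$ in between. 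Clause~(\ref{thm:con5}) is a direct invocation of Theorem~\ref{thm:lovasz}: in this regime $k$ is small and $t$ is piecewise constant, and the exact values of $\Theta_n$ at $k = t^2$ and $k = t^2-t$, together with Claim~\ref{clm:fourth} propagating across each $t$-plateau, yield the stated $k+t$.

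For Algorithm~\ref{alg:ccb}, the only clause with a recursive right-hand side is~(\ref{thm:con1}); it unwinds in closed form by iterating $n \mapsto n-1$ until one reaches the smallest $n_0$ with $m > \mmax{n_0-1}$, which satisfies $n_0 = O(\sqrt{m})$, then applying one of clauses~(\ref{thm:con2})--(\ref{thm:con5}) to compute $\Theta_{n_0}(m)$ and adding $n - n_0$. All quantities involved are integers of size at most $\binom{n}{2}$, so each arithmetic operation runs on $O(\log n)$-bit numbers, and the unwound algorithm has no recursion left, giving total runtime linear in the bit-length of $n$. The one place in the proof that demands real care, rather than mechanical assembly, is matching the parity cases of $n$ when translating between the $\mmax{\cdot}$ formulation used throughout and the $(k,t)$ formulation of Theorem~\ref{thm:lovasz} at the right-hand boundaries in Clauses~(\ref{thm:con3})--(\ref{thm:con5}); this bookkeeping is the main obstacle.
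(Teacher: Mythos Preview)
Your proof is correct and follows essentially the same assembly as the paper's: clause~(\ref{thm:con1}) from Lemma~\ref{lem:eighth}; clause~(\ref{thm:con2}) from Lemma~\ref{lem:seventh} at the left, Mantel at the right, and Claim~\ref{clm:fourth} forcing unit increments in between; clauses~(\ref{thm:con3}) and~(\ref{thm:con4}) from Lemmas~\ref{lem:sixteenth} and~\ref{lem:nineteenth} at the left endpoints, the reformulated Lov\'asz exact values at the right endpoints, and monotonicity in between. Your detour through Lemma~\ref{lem:sixth} to obtain $\Theta_{n-1}(\mmax{n-1})=\mmax{n-1}$ is correct but unnecessary, since this is just the Mantel fixed point stated in the introduction.

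There is one genuine slip, in clause~(\ref{thm:con5}). You claim that Claim~\ref{clm:fourth} ``propagates'' exactness from the Lov\'asz-sharp points $k=t^2$ and $k=t^2-t$ across each $t$-plateau, but Claim~\ref{clm:fourth} reads $\Theta_n(m+1)\le\Theta_n(m)+1$, which in $k$-coordinates says $\Theta_n[k-1]\le\Theta_n[k]+1$. To force $\Theta_n[k]=k+t$ between the two sharp points you would need the opposite inequality $\Theta_n[k+1]\le\Theta_n[k]+1$, i.e.\ $\Theta_n(m-1)\le\Theta_n(m)+1$, and that is not what Claim~\ref{clm:fourth} provides. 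The paper does not attempt this: its proof of~(\ref{thm:con5}) simply invokes Theorem~\ref{thm:lovasz} as a bound, and the remark following the theorem (that the conjectural second version ``grants an exact $\Theta_n(m)$, as opposed to an upper bound'') confirms that~(\ref{thm:con5}) is meant only as the Lov\'asz upper bound at intermediate $k$. Drop the propagation sentence and just cite Lov\'asz, and your argument matches the paper's.
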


\begin{proof}
(\ref{thm:con1}) follows from Lemma~\ref{lem:eighth}.
(\ref{thm:con2}) follows from several sources: the left bound, i.e.,
$m=\mmax{n-1}+1$ is Lemma~\ref{lem:seventh}, the right bound, i.e., $m=\mmax{n}$
is Mantel (Theorem~\ref{thm:mantel}), and the value, which consists
of a subset of the line $y=x$, follows from Claim~\ref{clm:fourth} which limits
the growth, and therefore imposes an increase of 1 at each step.
(\ref{thm:con3}) and (\ref{thm:con4}) follow from Lemmas~\ref{lem:sixteenth}
and~\ref{lem:nineteenth} combined with Lov{\'a}sz (Theorem~\ref{thm:lovasz}).
(\ref{thm:con5}) is the bound provided by Lov{\'a}sz.
\qed
\end{proof}

Given proof of Conjecture~\ref{cnj:seventeenth} for all $m$, we can improve
Theorem~\ref{thm:final} to:

\begin{subnumcases}{\Theta_n(m)=}
\Theta_{n-1}(m)+1 & \text{for $m\le\mmax{n-1}$} \label{thm2:con1}\\
m & \text{for $\mmax{n-1}+1\le m\le\mmax{n}$} \label{thm2:con2}\\
\mmax{n-1} & \text{for $\mmax{n}+1\le m\le\mmax{n+1}-1$} \label{thm2:con3}\\
\Theta_{n-1}(m-(n-1)) & \text{for $m\ge\mmax{n+1}$} \label{thm2:con4}
\end{subnumcases}

Algorithm \ref{alg:ccb} (below), reflects the second version 
of Theorem~\ref{thm:final} (i.e. the one directly above), 
because this version grants an 
exact $\Theta_n(m)$, as opposed to an upper bound. 
It could be adjusted to reflect the original
theorem (roughly, omitting the improvements made in Lemmas~\ref{lem:sixteenth}
and~\ref{lem:nineteenth}) by replacing lines 18-21 with the single line
``return $k+p$''. 

\begin{algorithm}[H]
	\caption{$\Theta_n(m)$}\label{alg:ccb}
	\begin{algorithmic}[1]
	\Require Integers $n,m$ such that $n>0$ and $0\leq m\leq{n\choose2}$.
		\State $max \leftarrow \mmax{n}$
		\If{$m=max$}
			\State\Return $m$
		\EndIf
		\If{$m<max$}
			\State $p \leftarrow \sqrt{m}$
			\If{$p=\lfloor p\rfloor$}
				\State\Return $n+p(p-2)$
			\EndIf
			\State $p \leftarrow \lfloor p\rfloor$
			\If{$m\leq p(p+1)$}
				\State\Return $m+n-2p-1$
			\EndIf
			\State\Return $m+n-2p-2$
		\EndIf
		\State $k \leftarrow {n\choose2}-m$
		\State $p \leftarrow \lfloor\frac{1+\sqrt{1+4k}}{2}\rfloor$
		\If{$k<p^2$}
			\State\Return $p^2$
		\EndIf
		\State\Return $p(p+1)$
    \end{algorithmic}
  \end{algorithm}


\end{document}